\documentclass{lmcs}
\pdfoutput=1

\usepackage{lastpage}
\lmcsdoi{15}{1}{21}
\lmcsheading{}{\pageref{LastPage}}{}{}%
{Jul.~04,~2018}{Mar.~05,~2019}{}

\usepackage{amsmath}
\usepackage{amssymb}
\usepackage{amsthm}
\usepackage{cancel}
\usepackage{microtype}
\usepackage{color}
\usepackage{hyperref}
\usepackage{subcaption}
\usepackage{graphicx}
\usepackage{subcaption}
\usepackage{tikz}
\usepackage{pgfplots}
\pgfplotsset{compat=1.11}

\newcommand{\NN}{\mathbb{N}}
\newcommand{\RR}{\mathbb{R}}
\newcommand{\blockset}{B}
\newcommand{\block}{b}
\newcommand{\traceset}{T}
\newcommand{\trace}{t}
\newcommand{\len}{l}

\newcommand{\algorithm}{P}
\newcommand{\algorithmprime}{Q}
\newcommand{\controlstateset}{S}
\newcommand{\controlstate}{s}
\newcommand{\initialstate}{i}
\newcommand{\capacity}{n}
\newcommand{\upcontrol}{tr}
\newcommand{\evict}{evict}
\newcommand{\miss}{miss}
\newcommand{\contentset}{C}
\newcommand{\content}{c}
\newcommand{\invalid}{\bot}
\newcommand{\configurations}{G}
\newcommand{\config}{g}
\newcommand{\initialconf}{\initialstate g}
\newcommand{\upconfig}{update}

\newcommand{\leakrat}{r}

\newcommand{\sizeof}[1]{\left| #1 \right|}


\keywords{security, cache memory, cache algorithms, competitiveness}

\begin{document}

\title{On the Incomparability of Cache Algorithms in Terms of Timing Leakage}

\author{Pablo Ca{\~n}ones}
\address{IMDEA Software Institute and Universidad Polit{\'e}cnica de Madrid, Madrid, Spain}
\email{pablo.canones@imdea.org}

\author{Boris K{\"o}pf}
\address{IMDEA Software Institute, Madrid, Spain and Microsoft Research, Cambridge, UK}
\email{boris.koepf@microsoft.com}

\author{Jan Reineke}
\address{Saarland University, Saarbr{\"u}cken, Germany}
\email{reineke@cs.uni-saarland.de}

\begin{abstract}
Modern computer architectures rely on caches to reduce the latency gap
between the CPU and main memory. While indispensable for performance, caches pose a serious threat
to security because they leak information about memory access patterns
of programs via execution time. 

In this paper, we present a novel approach for reasoning about the
security of cache algorithms with respect to timing leaks. The basis
of our approach is the notion of {\em leak competitiveness}, which
compares the leakage of two cache algorithms on every possible
program. Based on this notion, we prove the following two results:

First, we show that leak competitiveness is \emph{symmetric} in the
cache algorithms.  This implies that no cache algorithm dominates
another in terms of leakage via a program's total execution time. This is in contrast
to performance, where it is known that such dominance
relationships exist.

Second, when restricted to caches with finite control, the
leak-competitiveness relationship between two cache algorithms is
either asymptotically linear or constant. No other shapes are
possible.
\end{abstract}

\maketitle

\section{Introduction}
Modern computer architectures rely on caches to reduce the latency gap between the CPU and main memory. Accessing data that is cached (a cache hit) can be hundreds of CPU cycles faster than accessing data that needs to be fetched from main memory (a cache miss), which translates into significant performance gains. 

While caches are indispensable for performance, they pose a serious threat to security. An attacker who can distinguish between cache hits and misses via timing measurements can learn information about the memory access pattern of a victim's program. This side channel has given rise to a large number of documented attacks, e.g.~\cite{aciiccmez2006trace,aciiccmez2007cache,Bernstein05cache-timingattacks,GullaschBK11,kocher2018spectre,lipp2018meltdown,
LiuYGHL15,osvikshamir06cache,YaromF14}.

From a security point of view it would be ideal to completely eliminate cache side channels by design, as in~\cite{TiwariOLVLHKCS11,zwsm15}. Unfortunately, such conservative approaches also partially void the performance benefits of caches. In practice, one usually seeks to identify a trade-off between security and performance, which requires comparing different cache designs in terms of their security and performance properties.

While performance analysis of cache designs is an established field~\cite{al2004performance,dorrigiv2010alternative} there are only few approaches concerned with analyzing their security. Examples are~\cite{he2017secure}, which analyzes the effect of the size of the cache and how it is shared between different agents, and \cite{canones2017}, which measures the security of cache algorithms with respect to adversaries who can gather information about a victim's computation by probing the state of a shared cache.

In this paper, we present a novel approach for evaluating the security of caches. More precisely, we focus on the amount of information that a cache algorithm leaks to an adversary that can measure a program's overall execution time. Leakage through a program's overall execution time is practically relevant because it can be exploited remotely, and conceptually interesting because the notions of security and performance are tightly coupled -- even though there are more powerful ways to spy on a program via shared caches~\cite{YaromF14}.

The basis of our approach is a novel notion for comparing the leakage of cache algorithms, which we call {\em leak competitiveness}. Leak competitiveness is inspired by {\em competitiveness}, which is a standard notion for comparing the performance of online algorithms, and in particular cache algorithms~\cite{reineke2008relative,sleator1985amortized}. However, whereas competitive performance analysis compares cache algorithms on individual traces, leak competitiveness compares the leakage of cache algorithms on {\em sets} of traces, which accounts for the fact that information flow is a hyperproperty~\cite{clarkson2010hyperproperties}.

The central contribution of this paper is a characterization of the possible leak-competitiveness relationships between any two cache algorithms:%
\begin{itemize}
\item We find that leak competitiveness is \emph{symmetric} in the cache algorithms. 
	This implies that no cache algorithm dominates another in terms of leakage via execution time.
	Note that this is in contrast to performance, where it is known that such dominance relationships exist~\cite{reineke2008relative}.
	This result holds for a very general class of deterministic cache algorithms, including fully-associative caches, set-associative caches with arbitrary replacement policies, and even rather exotic caches such as skewed-associative caches.
\item If we restrict our attention to caches with finite control, which is natural for hardware-based cache implementations, the leak-competitiveness relationship between two cache algorithms is either asymptotically linear in the length of the program execution or it is constant. No other shapes are possible. 
\end{itemize}

The proofs of these results are based on three intermediate steps that are of independent interest. 
\begin{enumerate}
\item The first is to show that a pair of traces of memory accesses precisely characterizes the leak competitiveness relationship between any two cache algorithms.
\item The second step is to show that we can actually identify a {\em single} trace of memory accesses for which the difference in number of misses between both algorithms matches their leak competitiveness to within a factor of 2. 
This is surprising in the light that leakage is a hyperproperty, i.e., it requires {\em sets} of traces to express. 

\item The third step is to define a congruence on the cache contents of algorithms with finite control -- but potentially infinite data -- and to show that the resulting quotient is {\em finite}.
Our characterization of leak competitiveness follows from the observation that, if the trace that witnesses the leak competitiveness is large enough, it will visit multiple congruent cache states, i.e. contain a cycle in the quotient. We then use a pumping argument to obtain a linear lower bound on the leak competitiveness from this cycle.
\end{enumerate}

\paragraph{\bf Organization of the paper} The remainder of this paper is structured as follows. In Section~\ref{sec:prelim} we introduce a general model of deterministic cache algorithms.  In Section~\ref{sec:leakrat} we introduce leak competitiveness and leak ratio, based on which we present our main results in Section~\ref{sec:charleakrat}.  Sections~\ref{sec:sequences}--\ref{sec:lowbound} present the proof of our results, following the structure outlined above. We present related work in Section~\ref{sec:related} before we conclude in Section~\ref{sec:conclusion}.

\section{Preliminaries}\label{sec:prelim}
Caches are fast but small memories that store a subset of the main memory's contents to bridge the latency gap between the CPU and the main memory.  To profit from spatial locality and to reduce management overhead, main memory is logically partitioned into a set $\blockset$ of memory blocks. Each block is cached as a whole in a cache line of the same size.  When accessing a memory block, the cache logic has to determine whether the block is stored in the cache (``cache hit'') or not (``cache miss''). In the case of a miss, the cache algorithm decides which memory block to evict and replace by a new one.

\begin{defi}
A {\em cache algorithm} (or \emph{algorithm}) is a tuple
$$\algorithm = (\controlstateset_\algorithm, \initialstate_\algorithm, \capacity_\algorithm, \upcontrol_\algorithm, \evict_\algorithm),$$ which consists of the following components:
\begin{itemize}
	\item The set of \emph{control states}, $\controlstateset_\algorithm$.
	\item The \emph{initial} control state, $\initialstate_\algorithm \in \controlstateset_\algorithm$.
	\item The \emph{capacity} of the cache, $\capacity_\algorithm \in \NN$.
	\item The \emph{transition} function, $\upcontrol_\algorithm : \controlstateset_\algorithm \times \{0, \dots, \capacity_\algorithm-1\} \rightarrow \controlstateset_\algorithm$, that, upon a hit to one of its $\capacity_\algorithm$ cache lines, determines the new control state of the cache.
	\item The \emph{evict} function, $\evict_\algorithm : \controlstateset_\algorithm \times \blockset \rightarrow \controlstateset_\algorithm \times \{0, \dots, \capacity_\algorithm-1\}$, that, upon a miss, determines the new control state of the cache and the cache line to evict. 
\end{itemize}
\end{defi}
During runtime a \emph{cache configuration} consists of the cache's control state and of its current \emph{content}.
The content is captured by a function $\content : \contentset_\algorithm = \{0, \dots, \capacity_\algorithm-1\} \rightarrow \blockset \cup \{\invalid\}$ that maps each cache line to the memory block it holds, or $\invalid$ if the line is invalid.
A cache configuration $\config=(\controlstate, \content) \in \configurations_\algorithm = \controlstateset_\algorithm \times \contentset_\algorithm$ is updated as follows upon a memory access:
\begin{equation}
\label{eq:update}
\upconfig_\algorithm((\controlstate,\content), \block) := 	\begin{cases}
							(\controlstate', \content) & \textit{if } \exists j: \content(j) = \block \wedge \controlstate' = \upcontrol_\algorithm(\controlstate,j),\\
							(\controlstate', \content[j \leftarrow \block]) & \textit{if } \forall k: \content(k) \neq \block \wedge (\controlstate', j) = \evict_\algorithm(\controlstate, \block).
						\end{cases}
\end{equation}
Upon a hit, the update function is used to obtain the new control state.
Upon a miss, the accessed block replaces one of the cached blocks, determined by the evict function.

The above definition of a cache algorithm is quite general: 
it captures arbitrary deterministic caches that operate on a bounded capacity buffer.
This includes direct-mapped, set-associative, fully-associative caches, and even skewed-associative caches with arbitrary deterministic replacement policies.
Well-known \emph{deterministic} replacement policies which fit our model are \emph{least-recently used} (LRU), used in various Freescale processors such as the MPC603E and the TriCore17xx, as well as the recent Kalray MPPA~256; \emph{pseudo-LRU} (PLRU), a cost-efficient variant of LRU, used in the Freescale MPC750 family and multiple Intel microarchitectures; \emph{most-recently used} (MRU), also known as \emph{not most-recently used} (NMRU), another cost-efficient variant of LRU, used in the Intel Nehalem; \emph{first-in first-out} (FIFO), also known as \textsc{Round Robin}, used in several ARM and Freescale processors such as the ARM922 and the Freescale MPC55xx family; Pseudo-Round Robin, used in the NXP Coldfire 5307.

\emph{Notation:}
The update function is lifted to traces $\trace \in \blockset^*$ of blocks recursively as follows:
\begin{align*}
\upconfig_\algorithm((\controlstate,\content), \epsilon) & := (\controlstate,\content),\\
\upconfig_\algorithm((\controlstate,\content), \block\circ \trace) & := \upconfig_\algorithm(\upconfig_\algorithm((\controlstate,\content),\block), \trace).
\end{align*}
						
The number of misses $\algorithm((\controlstate,\content),\trace)$ of an algorithm $\algorithm$ on a trace $\trace \in \blockset^*$ starting in configuration $(\controlstate,\content)$ is determined recursively as follows:
\begin{align*}
\algorithm((\controlstate,\content), \epsilon) & := 0,\\
\algorithm((\controlstate,\content), \block\circ \trace) & := \miss(\block,\content) + \algorithm(\upconfig_\algorithm(\block,(\controlstate,\content)), \trace),
\end{align*}
where
$\miss(\block,\content) = (\forall j: \content(j) \neq \block~?~1 : 0)$.
				
We use $\algorithm(\trace)$ as a shortcut for $\algorithm((\initialstate_\algorithm,\lambda j.\invalid), \trace)$, i.e., the number of misses on the trace $\trace$ when starting in the initial configuration of the cache.
Also, $\algorithm(\trace, \trace')$ is a shortcut for $\algorithm(\trace\trace')-\algorithm(\trace)$, i.e., the number of misses on the suffix~$\trace'$.  

\section{Leak Ratio}\label{sec:leakrat}
In this section we define a measure for comparing the security of cache algorithms, which we call the leak ratio.  The leak ratio is inspired
by quantitative notions of security in information flow analysis~\cite{koepfbasin07,smith09} and by the notion of relative miss competitiveness~\cite{reineke2008relative} from the real-time systems community. We revisit both notions first.

\subsection{Relative Miss Competitiveness}
Relative competitiveness~\cite{reineke2008relative} is a notion for comparing the worst-case performance of two cache algorithms. It is based on the classic notion of competitiveness~\cite{sleator1985amortized}, which compares an online algorithm with the optimal offline algorithm. 
Below, we reproduce a slightly simplified version of the definition of relative competitiveness from~\cite{reineke2008relative}:

\begin{defi}\label{def:miss-competitive}
  For $\leakrat\in\RR_{>0}$, we say that a algorithm $\algorithm$ is {\em $\leakrat$-miss-competitive relative to} algorithm $\algorithmprime$ if there exists $c\in\RR_{>0}$ such that
\begin{equation*}
\algorithm(\trace)\leq \leakrat\cdot \algorithmprime(\trace)+c,
\end{equation*}
for all traces $\trace \in\blockset^*$.
\end{defi}
\begin{exa}\label{ex:miss-competitive}
LRU of associativity 4 is 1-miss-competitive relative to FIFO of associativity 2. 
On the other hand, FIFO of associativity 2 is not $\leakrat$-miss-competitive to LRU of associativity 4 for any $\leakrat$.
Therefore, LRU of associativity 4 outperforms FIFO of associativity 2 in number of misses.
See~\cite{reineke2008relative} for details and more examples.
\end{exa}

\subsection{Leak Competitiveness}\label{sec:leak_ratio}

We next introduce a notion based on relative competitiveness that compares the amount of information that two cache algorithms leak via their timing behavior. We begin by recalling basic concepts from quantitative information-flow analysis.

\subsubsection{Quantifying Leaks}

As is common in side-channel analysis based on quantitative information-flow~\cite{koepfbasin07}, we quantify the amount of information a system leaks in terms of the number of observations an adversary can make. This number represents an upper bound on entropy loss, for different notions of entropy including Shannon entropy, min-entropy~\cite{smith09}, or g-vulnerability~\cite{m2012measuring}. Each of those notions of entropy is associated with an interpretation in terms of security. For example, using min-entropy as a basis for the interpretation, a bound on the number of observations corresponds to an upper bound on the factor by which guessing becomes easier through side-channel information. 

For formalizing a program's leakage through cache timing effects, we abstract the program in terms of the set $\traceset$ of traces of memory accesses it can perform. We always consider traces of finite length $\len$, hence $\traceset\subseteq\blockset^\len$.  We capture the attacker's observation of a program execution as the number of cache misses produced by the corresponding trace. If $\len$ is known to the adversary, then she can deduce the number of misses from the program's overall execution time, due to the large latency gap between cache hits and cache misses\footnote{We consider ``noiseless'' attacks where the attacker is able to obtain the maximum amount of information from the cache. This assumption is common in the literature on security to safely over-approximate the security for real life attacks.}. The information the program leaks through timing is hence captured by $\algorithm(\traceset)\subseteq \NN$, the image of $\traceset$ under $\algorithm$, and quantified by $\sizeof{\algorithm(\traceset)}\in \NN$.

\subsubsection{Comparing Leaks}

Notions of performance such as relative competitiveness (see Definition~\ref{def:miss-competitive}) are based on trace properties, hence the point of comparison are individual traces. In contrast, information-theoretic notions of leakage are hyperproperties, which makes {\em sets} of traces the natural point of comparison. We now define \emph{leak competitiveness}, a concept that enables us to compare the timing leakage of two cache algorithms, and that is based on lifting miss competitiveness from traces to sets of traces.

\begin{defi}\label{defi:leak_compet}
For a function $\leakrat\colon\NN\rightarrow\RR_{\ge 0}$, we say that algorithm $\algorithm$ is \emph{$\leakrat$-leak-competitive} relative to algorithm $\algorithmprime$ if, for all $\len\in\NN$,
\begin{equation*}
\sizeof{\algorithm(\traceset)}\le  \leakrat(\len)\cdot\sizeof{\algorithmprime(\traceset)},
\end{equation*}
for all set of traces of blocks $\traceset\subseteq\blockset^\len$.
\end{defi}

Even though the definition of leak competitiveness is based on a lifting of miss competitiveness, there are important differences. Most importantly, leak competitiveness of two algorithms $\algorithm, \algorithmprime$ bounds the ratio of leakage for each $\len\in\NN$, whereas miss competitiveness bounds the ratio of hits and misses for all $\len$.
For traces of length $\len$ and an empty initial cache, the number of misses any cache algorithm can produce is in $\{1,\dots,\len\}$, which means that any two algorithms are $\leakrat$-leak-competitive for $\leakrat(\len)=\len$. The question is hence not whether two cache algorithms are leak-competitive, but rather what shape this relationship takes. We introduce the leak ratio to facilitate reasoning about this shape.

 \begin{defi}
  Given a pair of algorithms $\algorithm$ and $\algorithmprime$ we define the \emph{leak ratio} $\leakrat_{\algorithm,\algorithmprime}$ as:
\begin{equation*}
\leakrat_{\algorithm,\algorithmprime}(\len)=\min\{\leakrat(\len)\mid \leakrat\colon\NN\to\RR_{\ge 0}, 
\text{$\algorithm$ is $\leakrat$-leak-competitive relative to $\algorithmprime$}\}.
\end{equation*}
\end{defi}

As we are mostly interested in the asymptotic behavior of $\leakrat_{\algorithm, \algorithmprime}$, the lack of an additive slack in the definition of miss competitiveness is not essential.

\section{Characterizing the Leak Ratio}\label{sec:charleakrat}
In this section we present our main result, which is a
characterization of the asymptotic behavior of the leak ratio for any
pair of cache algorithms. We then give interpretations of this
behavior in terms of security. We present the proofs of the technical
results in Sections~\ref{sec:sequences}-\ref{sec:lowbound}.
\subsection{Non-Dominance}

The key question motivating our work is whether some cache algorithms are preferable to others in terms of their leakage via
timing. This is a natural question to ask because it is well-known
that such preferences relations exist for performance, see
Example~\ref{ex:miss-competitive}. The following theorem gives a
negative answer to the question above.
\begin{thm}\label{thm:disc-nodom}

For each pair of algorithms $\algorithm,\algorithmprime$ we have, as $\len$ grows:
\begin{equation*}
\mathcal{O}(\leakrat_{\algorithm,\algorithmprime}(\len))=\mathcal{O}(\leakrat_{\algorithmprime,\algorithm}(\len)).
\end{equation*}

\end{thm}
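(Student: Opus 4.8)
The plan is to reduce the leak ratio to a single, manifestly symmetric quantity: the \emph{range} of the per-trace miss difference $\delta(\trace) := \algorithm(\trace) - \algorithmprime(\trace)$ over traces of length $\len$. Write $\delta_{\max}(\len) = \max_{\trace\in\blockset^\len}\delta(\trace)$ and $\delta_{\min}(\len) = \min_{\trace\in\blockset^\len}\delta(\trace)$, and set $\Delta(\len) = \delta_{\max}(\len) - \delta_{\min}(\len)$. Since replacing $(\algorithm,\algorithmprime)$ by $(\algorithmprime,\algorithm)$ replaces $\delta$ by $-\delta$ and leaves the range $\Delta(\len)$ unchanged, it suffices to prove $\leakrat_{\algorithm,\algorithmprime}(\len) = \Theta(1 + \Delta(\len))$: the same estimate then holds for $\leakrat_{\algorithmprime,\algorithm}$ with the identical function $\Delta$, so the two $\mathcal{O}$-classes coincide.

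First I would rewrite the leak ratio combinatorially. Because $\sizeof{\algorithm(\traceset)}$ and $\sizeof{\algorithmprime(\traceset)}$ depend on $\traceset$ only through the set of pairs $\{(\algorithm(\trace),\algorithmprime(\trace)) : \trace\in\traceset\}$, maximizing over $\traceset\subseteq\blockset^\len$ is the same as maximizing over arbitrary subsets $D$ of the achievable pair set $\Pi_\len = \{(\algorithm(\trace),\algorithmprime(\trace)) : \trace\in\blockset^\len\}$, giving $\leakrat_{\algorithm,\algorithmprime}(\len) = \max_{\emptyset\ne D\subseteq\Pi_\len}\sizeof{\pi_1(D)}/\sizeof{\pi_2(D)}$, where $\pi_1,\pi_2$ project onto the $\algorithm$- and $\algorithmprime$-coordinates. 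An averaging step then collapses this to a single fiber: for an optimal $D$ one may add all pairs whose $\algorithmprime$-value already occurs (this never decreases the numerator nor changes the denominator), after which the numerator is a union of $\sizeof{\pi_2(D)}$ fibers and hence at most $\sizeof{\pi_2(D)}$ times the largest one. This yields the exact identity $\leakrat_{\algorithm,\algorithmprime}(\len) = \max_{y} \sizeof{\{\algorithm(\trace) : \trace\in\blockset^\len,\ \algorithmprime(\trace) = y\}}$.

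The upper bound $\leakrat_{\algorithm,\algorithmprime}(\len) \le 1 + \Delta(\len)$ is then immediate: for fixed $\algorithmprime(\trace)=y$ every $\algorithm$-value equals $y+\delta(\trace)$, so the fiber lives in an integer interval of length $\Delta(\len)$ and has at most $\Delta(\len)+1$ elements. Since $\Delta(\len)\le 2\max_\trace\sizeof{\delta(\trace)}$, this is equivalent to the ``within a factor of $2$'' single-trace bound announced in the introduction.

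The hard part, and the main obstacle, is the matching lower bound $\leakrat_{\algorithm,\algorithmprime}(\len) = \Omega(\Delta(\len))$, i.e. exhibiting one $\algorithmprime$-value $y$ whose fiber realizes $\Omega(\Delta(\len))$ distinct $\algorithm$-values. I would invoke the pair-of-traces characterization (the first intermediate step) applied to a trace $\trace_+$ attaining $\delta_{\max}(\len)$ and a trace $\trace_-$ attaining $\delta_{\min}(\len)$, and combine them into a single witness trace in which $\algorithmprime$ stays constant over a long stretch while $\algorithm$ accumulates $\Omega(\Delta(\len))$ misses. Concretely, I would pad prefixes to length $\len$ by repeating their last block, a guaranteed hit for both caches and hence miss-count preserving, so that the padded prefixes of the witness form a family lying in a single $\algorithmprime$-fiber but spreading $\algorithm$ across the whole interval. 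Controlling the relative cache states so that such a long ``$\algorithmprime$-idle, $\algorithm$-active'' phase actually occurs, rather than the two algorithms' misses cancelling or interleaving, is the delicate point, and is exactly where the explicit trace engineering of the first two intermediate steps is needed. Once this is in place we obtain $\leakrat_{\algorithm,\algorithmprime}(\len)=\Theta(1+\Delta(\len))$, and by the symmetry of $\Delta$ noted at the outset the theorem follows.
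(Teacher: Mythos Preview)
Your overall strategy coincides with the paper's: reduce $\leakrat_{\algorithm,\algorithmprime}(\len)$ to a quantity built from the per-trace difference $\delta(\trace)=\algorithm(\trace)-\algorithmprime(\trace)$, observe that this quantity is symmetric in $\algorithm,\algorithmprime$, and conclude. The paper uses $\max_\trace|\delta(\trace)|$ (Corollary~\ref{cor:singletrace}); your $\Delta(\len)$ differs from it by at most a factor of~$2$ since there is always a trace with $\delta=0$ (e.g.\ $\len$ pairwise distinct blocks). Your exact identity $\leakrat_{\algorithm,\algorithmprime}(\len)=\max_y\sizeof{\{\algorithm(\trace):\algorithmprime(\trace)=y\}}$ is a clean reformulation of Proposition~\ref{prop:maxset}, and your upper bound $\leakrat_{\algorithm,\algorithmprime}(\len)\le 1+\Delta(\len)$ is correct.

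The gap is in the lower bound, and it is exactly at the point you flag as ``delicate''. Your concrete plan is to find a single witness trace containing a long stretch on which $\algorithmprime$ has only hits while $\algorithm$ accumulates $\Omega(\Delta(\len))$ misses, and then pad prefixes ending inside that stretch with repeated hits. But you give no argument that such a stretch exists: along a trace realizing $\delta_{\max}$, the positions where $\delta$ increases (i.e.\ $\algorithm$ misses and $\algorithmprime$ hits) may be arbitrarily interleaved with $\algorithmprime$-misses, so there need not be any single $\algorithmprime$-idle interval of length $\Omega(\Delta(\len))$. Moreover, padding only with hits preserves \emph{both} miss counts, so padded prefixes taken at positions with different $\algorithmprime$-counts land in different $\algorithmprime$-fibers, not one. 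Finally, ``combining $\trace_+$ and $\trace_-$'' is a red herring: they are generally not $\algorithmprime$-equivalent, and you never need both.

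The paper's construction avoids all three issues. First, it pairs a single extremal trace (say $\trace_+$) not with $\trace_-$ but with the neutral trace of Lemma~\ref{lem:singletrace}, which has $\algorithm(\trace')=\algorithmprime(\trace')=\algorithmprime(\trace_+)$; this places both in the same $\algorithmprime$-fiber $y=\algorithmprime(\trace_+)$ with $\algorithm$-values $y$ and $y+\delta_{\max}$. Second, to fill in the intermediate $\algorithm$-values it does \emph{not} look for one long $\algorithmprime$-idle stretch; instead, for each target $k$ it picks (by discrete intermediate value) a prefix of $\trace_+$ where the running difference equals $\algorithmprime(\trace_+)-k$, then pads first with fresh blocks that \emph{miss in both caches} until the $\algorithmprime$-count reaches $y$, and only then with repeated hits to reach length~$\len$ (Lemma~\ref{lemm:set_seq}). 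The miss-padding is what equalizes the $\algorithmprime$-counts across all the constructed traces; with hit-padding alone this step fails. Once you replace your single-stretch plan by this two-phase padding, your argument goes through and yields $\leakrat_{\algorithm,\algorithmprime}(\len)\ge 1+\max(\delta_{\max},-\delta_{\min})\ge 1+\Delta(\len)/2$, which is exactly the paper's route to Theorem~\ref{thm:nodom}.
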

Theorem~\ref{thm:disc-nodom} shows that cache algorithms are incomparable in the sense that, for every $\len\in\NN$ and every set of traces $\traceset$ that witnesses an advantage for $\algorithm$ over $\algorithmprime$ in terms of leakage, there is a set of traces $\traceset'$ that witnesses a comparable advantage of $\algorithmprime$ over $\algorithm$.  The following examples exhibits such witnesses for $\algorithm=$ LRU and $\algorithmprime=$ FIFO.

\begin{exa} \label{ex:motivating}
Consider two fully-associative caches of capacity two, one with LRU and the other with FIFO replacement, and the following sets of traces:
\begin{equation*}
\traceset=\left\{\begin{aligned}
&\mathrm{ABACACBBB},\\
&\mathrm{ABACDAAAA},\\
&\mathrm{ABACBADDD},\\
&\mathrm{ABACBACBB},\\
&\mathrm{ABACBACBA}
\end{aligned}\right\}
\quad\quad\quad \traceset'=
\left\{\begin{aligned}
&\mathrm{ABACBAAAA},\\
&\mathrm{ABACDAAAA},\\
&\mathrm{ABACABCCC},\\
&\mathrm{ABACACBCA}
\end{aligned}\right\}
\end{equation*}
Starting from an empty initial cache state, LRU produces 5 different observations on $\traceset$, whereas FIFO produces only one. In contrast, FIFO produces 4 different observations on $\traceset'$ whereas LRU produces only one. 

The root cause for this divergent behavior is that, after accessing the prefix ABAC, the content of both caches differs: block C evicts the least recently used block for LRU (i.e.,~B) but the first block to enter the cache for FIFO (i.e., A). The suffixes of the traces are constructed in such a way that the difference in cache content maps to different observable behavior. The full diagram of updates of the cache when using these sets of traces is given in Figure~\ref{fig:example} in the Appendix. 
\end{exa}
The proof of Theorem~\ref{thm:disc-nodom} is based on a systematic way of constructing sets of traces such as the ones in Example~\ref{ex:motivating}. Formally, the theorem follows from applying Theorem~\ref{thm:nodom}, introduced in Section~\ref{sec:approxsingletrace}, to $\algorithm,\algorithmprime$ and to $\algorithmprime, \algorithm$.  Moreover, we will see later that these sets can be obtained from only two traces of memory blocks and that those two traces are enough to characterize the leak ratio.

\subsection{Shapes of $\leakrat_{\algorithm,\algorithmprime}$}

In Section~\ref{sec:leakrat} we have already observed that the leak
ratio $\leakrat_{\algorithm,\algorithmprime}$ between any two algorithms
$\algorithm$ and $\algorithmprime$ is upper bounded by a linear
function. The interesting question is hence what sublinear shapes
$\leakrat_{\algorithm,\algorithmprime}$ can take. We answer this question
for cache algorithms with finite sets of control states, which
encompasses most hardware-based cache implementations. For this
important class, the following theorem shows that the leak ratio
is either asymptotically constant or linear, ruling out any
nontrivial sublinear shape.

\begin{thm}
\label{thm:linear}
For each pair of algorithms $\algorithm,\algorithmprime$ with finite control
we have either 
\begin{itemize}
\item $\leakrat_{\algorithm,\algorithmprime}(\len)\in\Theta(\len)$, or
\item $\leakrat_{\algorithm,\algorithmprime}(\len)\in\Theta(1)$.
\end{itemize}
\end{thm}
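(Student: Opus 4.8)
The plan is to reduce the statement about the leak ratio to a statement about a single, purely quantitative quantity, and then to analyze the growth of that quantity using the finiteness of the congruence quotient promised by the third step outlined in the introduction. By Theorem~\ref{thm:nodom} the leak ratio is characterized, up to a factor of $2$, by the largest miss-count difference that a single trace of length $\len$ can exhibit between the two algorithms, so I define
\begin{equation*}
d(\len) := \max_{\trace \in \blockset^\len}\bigl(\algorithm(\trace) - \algorithmprime(\trace)\bigr),
\end{equation*}
and use that $\leakrat_{\algorithm,\algorithmprime}(\len) = \Theta(\max(1, d(\len)))$ together with the trivial bound $d(\len) \le \len$ (note that $\leakrat_{\algorithm,\algorithmprime}(\len) \ge 1$ always, since any singleton $\traceset$ already yields ratio $1$). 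The theorem then follows once I show that $d$ is either bounded or grows linearly. A first easy observation is that $d$ is non-decreasing: given an optimal trace of length $\len \ge 1$, appending its last accessed block is a hit for both $\algorithm$ and $\algorithmprime$ and hence leaves $\algorithm(\trace) - \algorithmprime(\trace)$ unchanged, so $d(\len+1) \ge d(\len)$ and $d(\len) \ge d(0) = 0$.

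The argument then splits on whether $d$ is bounded. If $\sup_\len d(\len) < \infty$, then $d(\len) = \Theta(1)$, and since $1 \le \leakrat_{\algorithm,\algorithmprime}(\len) \le 2\,d(\len) + O(1)$, we obtain $\leakrat_{\algorithm,\algorithmprime}(\len) \in \Theta(1)$. If instead $d$ is unbounded, then by monotonicity $d(\len) \to \infty$, and it suffices to prove the lower bound $d(\len) \in \Omega(\len)$; combined with $d(\len) \le \len$ this yields $d(\len) \in \Theta(\len)$ and hence $\leakrat_{\algorithm,\algorithmprime}(\len) \in \Theta(\len)$. Everything therefore hinges on showing that \emph{unbounded implies linear}, which is where I bring in the congruence on joint configurations and a pumping argument.

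For the pumping step I run $\algorithm$ and $\algorithmprime$ in lockstep on a trace $\trace = \block_1 \cdots \block_m$, tracking the joint configuration $\config_i \in \configurations_\algorithm \times \configurations_{\algorithmprime}$ after the $i$-th access and the running difference $d_i := \algorithm(\block_1\cdots\block_i) - \algorithmprime(\block_1\cdots\block_i)$. Each access is a hit or a miss independently for each algorithm, so $d_i - d_{i-1} \in \{-1, 0, 1\}$; in particular the running difference cannot skip a value. Let $N$ be the (finite) number of classes of the renaming congruence on joint configurations supplied by the third step, and choose, using unboundedness, a trace whose final difference exceeds $N$. Because the difference moves in unit steps from $0$ up past $N$, the first-passage times $\tau_1 < \cdots < \tau_{N+1}$ at which it first reaches levels $1, \dots, N+1$ all exist, and the joint configurations $\config_{\tau_1}, \dots, \config_{\tau_{N+1}}$ realize $N+1$ values among only $N$ classes. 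By pigeonhole two of them, at levels $a < b$, are congruent, and the intervening segment $\sigma$ forms a cycle in the quotient along which the difference increased by the strictly positive amount $b - a$. Relabeling $\sigma$ by the block-bijection witnessing the congruence and splicing $j$ relabeled copies after the prefix $\alpha := \block_1\cdots\block_{\tau_a}$ then produces, for each $j$, a trace of length $|\alpha| + j|\sigma|$ on which the difference is $a + j(b-a)$; letting $j$ grow gives $d(\len) \in \Omega(\len)$ along this arithmetic progression, and monotonicity of $d$ extends the bound to all $\len$.

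The main obstacle I expect is concentrated entirely in this last step, and is twofold. First, I must set up the congruence as a genuine bisimulation \emph{up to renaming}: if two joint configurations are related by a block-bijection, replaying any segment under the relabeling must reproduce the same per-access hit/miss pattern for \emph{both} algorithms simultaneously, so that each spliced copy reliably contributes the increment $b - a$ and lands in a configuration again congruent to the cycle start. Verifying that this invariant is preserved across the concatenation of relabeled copies — so that no spurious hits or misses are introduced between copies — is the delicate part, and it is exactly what forces the congruence to quotient by renaming and requires $\blockset$ to be large enough to supply the relabelings. Second, I rely on the finiteness $N < \infty$ of the quotient even though the data domain $\blockset$ may be infinite; this is the content of the third step (a configuration up to renaming is determined by its control state together with the finite pattern of which lines hold equal or invalid blocks), and the pumping argument degrades to nothing without it.
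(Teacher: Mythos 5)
Your overall route is the paper's: reduce the leak ratio to the worst-case single-trace miss difference via the factor-of-two approximation, split on whether that difference is bounded, and in the unbounded case pump a difference-increasing cycle in the finite quotient of joint configurations under renaming. The pumping step you sketch is essentially the paper's proof of Theorem~\ref{thm:lineardiff}, and your monotonicity observation for $d$ is a clean way to pass from the arithmetic progression of lengths produced by pumping to all lengths.

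However, there is a genuine error in the reduction: you define $d(\len)=\max_{\trace\in\blockset^\len}(\algorithm(\trace)-\algorithmprime(\trace))$ with the \emph{signed} difference, whereas Corollary~\ref{cor:singletrace} ties $\leakrat_{\algorithm,\algorithmprime}(\len)$ to the \emph{absolute} difference $|\algorithm(\trace)-\algorithmprime(\trace)|$, which is symmetric in the two algorithms. The claimed bound $\leakrat_{\algorithm,\algorithmprime}(\len)\le 2\,d(\len)+O(1)$ is false for the signed quantity. Concretely, take $\algorithm$ to be LRU of associativity 4 and $\algorithmprime$ to be FIFO of associativity 2 (Example~\ref{ex:miss-competitive}): $\algorithm$ is 1-miss-competitive relative to $\algorithmprime$, so $\algorithm(\trace)-\algorithmprime(\trace)\le c$ on all traces and your $d$ is $\Theta(1)$; but $\algorithmprime(\trace)-\algorithm(\trace)$ is unbounded, so by Theorem~\ref{thm:lineargrowth} the leak ratio $\leakrat_{\algorithm,\algorithmprime}(\len)$ is in $\Theta(\len)$, not $\Theta(1)$ as your case split would conclude. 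The same sign bias appears in your pumping argument, which only tracks upward first-passage times of the running difference and therefore cannot handle the case where the unboundedness lives on the other side. The repair is routine --- set $d(\len)=\max_{\trace}|\algorithm(\trace)-\algorithmprime(\trace)|$ and run the pumping step for whichever sign realizes the unboundedness, exchanging $\algorithm$ and $\algorithmprime$ as the paper does --- but as written the proof reaches a wrong conclusion on a concrete pair of finite-control algorithms.
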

These results are a direct consequence of
Theorem~\ref{thm:lineargrowth}, introduced in Section~\ref{sec:lowbound}, which shows that the leak ratio of two
finite-control algorithms $\algorithm,\algorithmprime$ is lower bounded by a
linear factor if and only if there exist traces that witness that
the difference in misses between $\algorithm$ and $\algorithmprime$ is
unbounded. Whether such traces exist determines in which of the two
classes described by Theorem~\ref{thm:linear} the algorithms $\algorithm$
and $\algorithmprime$ fall. If they do not exist, note that
Corollary~\ref{cor:singletrace} implies that
$\leakrat_{\algorithm,\algorithmprime}\in\mathcal{O}(1)$.

For example, any pair of algorithms with different capacities falls
into the first class. This is because one algorithm always contains a
block that the other does not, which allows to construct a trace of
unbounded difference in misses.

Together with Theorem~\ref{thm:disc-nodom}, Theorem~\ref{thm:linear}
leads to a stronger non-dominance result for finite-control
algorithms, namely that for every $\len\in\NN$ there are sets of traces
$\traceset_\algorithm^\len,\traceset_\algorithmprime^\len\subseteq\blockset^\len$ such that one algorithm asymptotically leaks
the largest possible amount of information whereas the other leaks
almost nothing. That is, $\algorithm(\traceset_\algorithm^\len)\in \Theta(1)$ and
$\algorithmprime(\traceset_\algorithm^\len)\in\Theta(\len)$, whereas
$\algorithm(\traceset_\algorithmprime^\len)\in\Theta(\len)$ and
$\algorithmprime(\traceset_\algorithmprime^\len)\in\Theta(1)$. 

\smallskip 
The non-dominance results from Theorems~\ref{thm:disc-nodom} and \ref{thm:linear} also show why we cannot define leak competitiveness in the same way as miss-competitiveness, that is, where each pair of cache algorithms has a constant leak ratio for all lengths of traces. Except for the case where both leak ratios are in $\Theta(1)$, if we can find, for each length, sets of traces where one cache algorithm leaks more and more information as we increase the length whereas the other leaks a constant amount, no constant value of the leak ratio satisfies the leak-competitiveness definition for all lengths.

We now compute the leak ratio functions for two pairs of cache algorithms to showcase how the constants ignored in the asymptotic results in Theorems~\ref{thm:disc-nodom} and \ref{thm:linear} can mean a small advantage of one cache algorithm over the other, provided the leak ratios are not in $\Theta(1)$.
\begin{exa}
\label{exa:lin-const}
We can compute the leak ratios for small lengths of traces by computing all traces of hits and misses of a given length and then choosing the subset of traces that produces the largest ratio in the number of observations in favor of each algorithm.

To compute these traces of hits and misses we simulate them by exhaustively enumerating all possible traces of memory blocks $\blockset^l$. We now argue the size of the set $\blockset$ needed for the case of capacity two.
For every pair of configurations updated from the initial by the same trace of memory blocks, the last accessed block is cached \eqref{eq:update}, accessing this block again produces a hit for both configurations. The other line of the configuration may store a different memory block for each cache algorithm so that accessing one of them produces a hit for one algorithm and a miss for the other and vice versa. Finally, accessing any memory block not cached for any of the cache algorithms produces a miss for both.
Then, for cache algorithms with capacity two, four memory blocks are enough to simulate all possible traces of hits and misses.

Consider two pairs of fully-associative caches of capacity two, one pair considers replacements LRU and FIFO while the other considers LRU and a cache algorithm that we denote (F)LRU that starts behaving like FIFO but, after seven accesses to memory, behaves like LRU for the remaining of the accesses.
The leak ratios for LRU and FIFO are shown in Figure~\hyperref[fig:linear]{1A} and the ones for LRU and (F)LRU are shown in Figure~\hyperref[fig:constant]{1B}.

We see that the leak ratios of LRU and FIFO exemplify the first case of Theorem~\ref{thm:linear} and that, by looking at the slopes, that the asymptotic approach ignores, we conclude that FIFO has a small advantage over LRU since the leak ratio of FIFO relative to LRU grows slower than that of LRU relative to FIFO, Figure~\hyperref[fig:linear]{1A}.

On the other hand, the leak ratios of LRU and (F)LRU exemplify the second case of Theorem~\ref{thm:linear}. The leak ratios for both cache algorithms start growing at different rates but, once both algorithms behave like LRU the leak ratios end up coinciding and become constant functions.
Although both cache algorithms behave the same starting from length eight, not all pairs of configurations contain the same memory blocks at this point, which still allows for both leak ratios to grow with the length. Once all traces update the pairs of configurations to having the same blocks for both algorithms, the leak ratios become constant functions.

A pair of cache algorithms that are the same or eventually become the same are the only ones that verify the second case of Theorem~\ref{thm:linear}.
This is because the leak ratios grow when the pairs of configurations do not have the same memory blocks cached and the access to a specific memory block produces a hit for one algorithm and a miss for the other. If, at some point, every access to a memory block has the same effect for both algorithms, the leak ratios do not grow anymore.
\begin{figure}
\centering

\begin{subfigure}{0.49 \textwidth}
\centering
\begin{tikzpicture}
\begin{axis}
	[ymin=0.5,
	ymax=12,
	xmin=1,
	xmax=17,
	xlabel= Length of trace,
	ylabel= Leak ratio,
	legend columns=1,
	legend entries={$\leakrat_{\text{LRU,FIFO}}(l)$,$\leakrat_{\text{FIFO,LRU}}(l)$},
	legend pos = north west,
	width=\textwidth,
	height=35 ex]
\addplot[color=red,style=solid] coordinates{
	(1, 1) (2, 1) (3, 1) (4, 1) (5, 2) (6, 3) (7, 4) (8, 4)  (9, 5) (10, 6)
	(11, 7) (12, 7) (13, 8) (14, 9) (15, 10) (16, 10) (17, 11)};
\addplot[color=blue,style=dashed] coordinates{
	(1, 1) (2, 1) (3, 1) (4, 1) (5, 2) (6, 3) (7, 3) (8, 4) (9, 4) (10, 5)
	(11, 5) (12, 6) (13, 6) (14, 7) (15, 7) (16, 8) (17, 8)};
\end{axis}
\end{tikzpicture}
\caption{Comparison of the leak ratios of LRU relative to FIFO and vice versa.}
\label{fig:linear}
\end{subfigure}
~\hfill
\begin{subfigure}{0.49\textwidth}
\centering
\begin{tikzpicture}
\begin{axis}
	[ymin=0.5,
	ymax=12,
	xmin=1,
	xmax=17,
	width=\textwidth,
	xlabel= Length of trace,
	ylabel= Leak ratio,
	legend columns=1,
	legend entries={$\leakrat_{\text{LRU,(F)LRU}}(l)$,$\leakrat_{\text{(F)LRU,LRU}}(l)$},
	legend pos = north west,
	height=35 ex]
\addplot[color=red,style=solid] coordinates{
	(1, 1) (2, 1) (3, 1) (4, 1) (5, 2) (6, 3) (7, 4) (8, 4) (9, 5) (10, 6)
	(11, 6) (12, 6) (13, 6) (14, 6) (15, 6) (16, 6) (17, 6)};
\addplot[color=blue,style=dashed] coordinates{
	(1, 1) (2, 1) (3, 1) (4, 1) (5, 2) (6, 3) (7, 3) (8, 4) (9, 4) (10, 5)
	(11, 5) (12, 6) (13, 6) (14, 6) (15, 6) (16, 6) (17, 6)};
\end{axis}
\end{tikzpicture}
\caption{Comparison of the leak ratios of LRU relative to (F)LRU and vice versa.}
\label{fig:constant}
\end{subfigure}
\caption{Example of the behavior of the leak ratios of the cache algorithms from Example~\ref{exa:lin-const}. Figure~\hyperref[fig:linear]{1A} shows two leak ratios that grow asymptotically linearly and where the slopes can give a slight advantage of one algorithm over the other. Figure~\hyperref[fig:constant]{1A} shows two leak ratios that eventually become constant functions and were, for large lengths of traces, there is no advantage of one algorithm over the other.}
\label{fig:ratios}
\end{figure}
\end{exa}
By using a constant notion of leak competitiveness, all cache algorithms, except for the ones in $\Theta(1)$, would be deemed incomparable. On the other hand, by defining the leak ratios as functions of the length of the trace and observing the growth rate as in Example~\ref{exa:lin-const}, we can establish a comparison between cache algorithms.

\subsection{Discussion}\label{sec:discussion} 

We now discuss the implications of Theorems~\ref{thm:disc-nodom} and~\ref{thm:linear} (short: {\em our results}) in practice.

\begin{enumerate}
\item Our results are asymptotic in nature. The constants hidden behind the $\mathcal{O}$-notation can indicate a (gradual) preference between algorithms on finite sets of traces. E.g., the traces in Example~\ref{ex:motivating} and the different slopes on Example~\ref{exa:lin-const}  show a slight advantage of FIFO over LRU.
\item Our results rely on the construction of sets of traces that witness advantages of one cache algorithm over another, see Example~\ref{ex:motivating}. However, the constructed traces need not correspond to a program of interest. Restricting to a specific class of programs corresponds to the constraint that witnesses be picked from a subset $\traceset\subseteq\blockset^\len$ instead of~$\blockset^\len$. Under such constraints, it may be possible that a preference relation between cache algorithms exists.
\item Our results rely on the assumptions that the caching algorithm is deterministic and based on demand paging, i.e., it loads blocks only when they are requested by the program. It is possible that randomized policies or features such as prefetching enable one to sidestep our results. For example, for miss-competitiveness it is known that randomized policies~\cite{fiat1991} achieve better bounds than those possible for deterministic policies~\cite{sleator1985amortized}.
The study of leak competitiveness for randomized cache algorithms is out of the scope of this paper. We are aware of non-dominance results for a similar notion of leak-competitiveness that consider the RANDOM cache algorithm that, upon a miss, evicts a memory block randomly~\cite{schroder2018}.

\item Our results rely on an adversary that can observe the overall execution time of the program as in, e.g.~\cite{Bernstein05cache-timingattacks,osvikshamir06cache}. They do not necessarily hold for adversaries that can observe the cache state after or during the computation of the victim. Such attacks are possible whenever the adversary shares the cache with the victim, which has shown to be the most effective attack vector. In contrast, our results are relevant for remote attacks, which are less effective, but harder to detect and defend against. We briefly discuss the case of access-based adversaries in Section~\ref{sec:related}.  
\end{enumerate}
Despite these limitations in scope, we do believe that our results lay an interesting basis for theory research in the domain of microarchitectural side-channel attacks, where foundational results are still scarce.

\section{Leak Ratio from a Pair of Traces}
\label{sec:sequences}
Information leakage is a hyperproperty, i.e., a property of sets of traces.
We now show that the leak can always be
expressed in terms of the difference in observations of only two
traces of memory blocks.  

For an algorithm $\algorithmprime$ and $\len \in \NN$, we
say that $\trace_1,\trace_2\in \blockset^\len$ are {\em $\algorithmprime$-equivalent}
whenever $\algorithmprime(\trace_1)=\algorithmprime(\trace_2)$. We say that a set $\traceset\subseteq\blockset^\len$
is {\em $\algorithmprime$-dense} if the image of $\traceset$ under $\algorithmprime$
is a contiguous sequence of natural numbers,
i.e. $\algorithmprime(\traceset)=\{j,j+1,\dots,j+k\}$ for some $j,k\in\NN$.

\begin{prop}
\label{prop:maxset}
For all pairs of algorithms $\algorithm$ and $\algorithmprime$, all
lengths $\len$, and all pairs of $\algorithmprime$-equivalent traces of memory blocks
$\trace_1, \trace_2\in\blockset^\len$:
\begin{equation}\label{eq:sequences}
\algorithm(\trace_2)-\algorithm(\trace_1) \le \leakrat_{\algorithm,\algorithmprime}(\len)-1.
\end{equation}
Moreover, there exist pairs of traces of $\algorithmprime$-equivalent
memory blocks $\trace_1, \trace_2\in\blockset^\len$ such that~\eqref{eq:sequences} is
an equality.
\end{prop}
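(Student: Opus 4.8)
The plan is to reduce the leak ratio to a purely combinatorial quantity and then prove matching upper and lower bounds. First I would observe that, directly from the definitions, $\leakrat_{\algorithm,\algorithmprime}(\len) = \max\{\sizeof{\algorithm(\traceset)}/\sizeof{\algorithmprime(\traceset)} : \emptyset \neq \traceset \subseteq \blockset^\len\}$, where the maximum is attained because both $\algorithm(\traceset)$ and $\algorithmprime(\traceset)$ are subsets of the finite set $\{0,\dots,\len\}$, so only finitely many ratios arise. Writing $d$ for the maximal value of $\algorithm(\trace_2)-\algorithm(\trace_1)$ over all $\algorithmprime$-equivalent pairs $\trace_1,\trace_2\in\blockset^\len$, the proposition is equivalent to the single identity $\leakrat_{\algorithm,\algorithmprime}(\len)=d+1$: inequality~\eqref{eq:sequences} is exactly $\leakrat_{\algorithm,\algorithmprime}(\len)\ge d+1$ (it holds for every pair iff it holds for the maximal one), while the ``moreover'' clause adds $\leakrat_{\algorithm,\algorithmprime}(\len)\le d+1$, so that the maximal pair attains equality.

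For the upper bound $\leakrat_{\algorithm,\algorithmprime}(\len)\le d+1$ I would take an arbitrary $\traceset\subseteq\blockset^\len$ and partition it into its $\algorithmprime$-equivalence classes, one per value in $\algorithmprime(\traceset)$. Inside a single class all traces share one $\algorithmprime$-value, so any two of them form an $\algorithmprime$-equivalent pair, whence their $\algorithm$-counts differ by at most $d$; the class therefore contributes at most $d+1$ distinct values to $\algorithm(\traceset)$. Summing over the $\sizeof{\algorithmprime(\traceset)}$ classes yields $\sizeof{\algorithm(\traceset)}\le(d+1)\sizeof{\algorithmprime(\traceset)}$, which is the desired bound.

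The lower bound is the heart of the argument. Fix an $\algorithmprime$-equivalent pair $\trace_1,\trace_2$ with common $\algorithmprime$-value $v$ and $\algorithm(\trace_1)=a\le b=\algorithm(\trace_2)$, and set $\delta=b-a$. It suffices to exhibit a set $\traceset$ lying inside the single $\algorithmprime$-class $\{\trace : \algorithmprime(\trace)=v\}$ whose $\algorithm$-image is all of $\{a,a+1,\dots,b\}$: then $\sizeof{\algorithmprime(\traceset)}=1$ and $\sizeof{\algorithm(\traceset)}=\delta+1$, so $\leakrat_{\algorithm,\algorithmprime}(\len)\ge\delta+1$, which is~\eqref{eq:sequences}. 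Concretely, I would prove the key lemma that the $\algorithm$-image of every $\algorithmprime$-class is a contiguous interval of integers (an ``$\algorithm$-dense'' class); since such an interval contains both $a$ and $b$, it contains every value in between, furnishing the required traces.

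The \emph{main obstacle} is proving this no-gaps property, i.e. realizing each intermediate $\algorithm$-count by a length-$\len$ trace whose $\algorithmprime$-count stays fixed at $v$. The natural route is an intermediate-value argument: morph one extremal trace into the other through single-block edits and show the $\algorithm$-count can be made to move in unit steps while the $\algorithmprime$-count is held constant. The difficulty is that a single edit in the middle of a trace can cascade and shift a miss count by more than one. I would circumvent this by performing the adjustments at a controlled position (effectively at the end of the relevant suffix, so no later access is affected) and by exploiting blocks that reside in one algorithm's cache but not the other's: accessing such a block is a hit for $\algorithmprime$ but a miss for $\algorithm$ (or vice versa), shifting $\algorithm$'s count by exactly one while leaving $\algorithmprime$'s unchanged. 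Such distinguishing blocks exist precisely when the two caches diverge --- which is exactly the regime in which $\delta>0$ --- and together with fresh blocks, which miss in both caches and are available as $\blockset$ is assumed large enough, they provide the unit-step dial needed to sweep the whole interval $\{a,\dots,b\}$, completing the lower bound and hence the proposition.
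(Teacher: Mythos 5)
Your overall skeleton is sound: reducing the proposition to the single identity $\leakrat_{\algorithm,\algorithmprime}(\len)=d+1$ is correct, and your upper bound $\sizeof{\algorithm(\traceset)}\le(d+1)\cdot\sizeof{\algorithmprime(\traceset)}$, obtained by partitioning $\traceset$ into $\algorithmprime$-classes and observing that each class contributes at most $d+1$ values to $\algorithm(\traceset)$, is valid and in fact more direct than the paper's route to the ``moreover'' clause (the paper extracts a single maximizing $\algorithmprime$-class from a maximizing set via an averaging inequality and then argues that this class is $\algorithm$-dense). You have also correctly isolated the crux: the no-gaps lemma stating that the $\algorithm$-image of an $\algorithmprime$-class is a contiguous interval, which is Lemma~\ref{lemm:set_seq} in the paper.

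However, your proof sketch of that key lemma has a genuine gap. You propose to realize the unit steps by accessing ``distinguishing blocks'' that are cached by one algorithm but not the other, and you justify their existence by saying they exist ``precisely when the two caches diverge, which is exactly the regime in which $\delta>0$''. This conflates two different notions of divergence: $\delta=\algorithm(\trace_2)-\algorithm(\trace_1)>0$ is a statement about the miss counts of two \emph{different} traces under the single algorithm $\algorithm$, and it does not imply that, after any particular prefix of either trace, the cache \emph{contents} of $\algorithm$ and $\algorithmprime$ differ --- both traces may well end at (or pass through) states where the two caches hold identical sets of blocks. Moreover, even when a distinguishing block exists at one moment, accessing it alters both cache states, and nothing guarantees that a further distinguishing block exists afterwards, so the ``dial'' cannot be turned $\delta$ times without additional argument. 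The paper avoids manufacturing these steps altogether: it observes that the function $u\mapsto\algorithmprime(\block_1\cdots\block_u)-\algorithm(\block_1\cdots\block_u)$ along the prefixes of one of the two given traces starts at $0$, changes by at most one per access, and ends at $v-a$ (resp.\ $v-b$), so by a discrete intermediate-value argument it already attains every intermediate value at some prefix; it then \emph{freezes} the difference at the desired value by appending fresh blocks that miss in both caches until the $\algorithmprime$-count returns to $v$, and pads to length $\len$ with repeated accesses to the last block, which hit in both. Your lower-bound argument would need to be replaced by, or reworked into, something of this form for the proposition to go through.
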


That is, every pair of traces that coincides in timing observation on one
algorithm cannot differ by more than the leak ratio on the other
algorithm. Moreover, there exists a pair of traces that matches this bound.

The proof of the upper bound is based on constructing a set
$\traceset\subseteq\blockset^\len$ of $\algorithmprime$-equivalent traces from a pair
$\trace, \trace'\in\blockset^\len$ of $\algorithmprime$-equivalent traces. The set
$\traceset$ is $\algorithm$-dense with maximum $\algorithm(\trace')$ and minimum
$\algorithm(\trace)$. It satisfies
\begin{equation}\label{eq:setquotient}
\leakrat_{\algorithm,\algorithmprime}(\len)\ge\frac{\sizeof{\algorithm(\traceset)}}{\sizeof{\algorithmprime(\traceset)}},
\end{equation}
which equals $\algorithm(\trace)-\algorithm(\trace')+1$ by construction.

The following lemma describes the construction of traces $\algorithmprime$-equivalent to
$\trace$ and $\trace'$ whose number of misses for $\algorithm$ cover every value in between
$\algorithm(\trace)$ and $\algorithm(\trace')$.
The set $\traceset$ is composed of these traces.
\begin{lem}
\label{lemm:set_seq}
Consider two $\algorithmprime$-equivalent traces $\trace$, $\trace'\in\blockset^\len$ with
$\algorithm(\trace)\leq\algorithm(\trace')$.  Then, for
every $\algorithm(\trace)\leq k\leq \algorithm(\trace')$ there exists a trace
$\trace^*\in\blockset^\len$ such that $\algorithm(\trace^*)=k$ and that is
$\algorithmprime$-equivalent to $\trace$ and $\trace'$.
\end{lem}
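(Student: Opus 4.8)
The plan is to prove the lemma by a \emph{discrete intermediate value argument}: I will exhibit, inside the $\algorithmprime$-equivalence class of $\trace$, a family of length-$\len$ traces whose numbers of $\algorithm$-misses form a unit-step walk running from $\algorithm(\trace)$ up to $\algorithm(\trace')$, so that every integer $k$ in between is realized by some member of the family. Write $c:=\algorithmprime(\trace)=\algorithmprime(\trace')$, assume $\len\ge 1$ (the case $\len=0$ is trivial), and for a trace $w$ let $w_{\le i}$ denote its length-$i$ prefix. The reason I route through such a family, rather than editing traces in place, is the obstacle that forces the whole argument: a single-symbol change to a trace \emph{cascades}, since it alters the cache configuration and can therefore change the miss count of the entire suffix by an unbounded amount and perturb $\algorithmprime$ as well. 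I will sidestep cascades by building the interpolating traces only from two \emph{configuration-independent} gadgets.

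The two gadgets are: (i) a \emph{fresh} block, i.e.\ one not occurring elsewhere in the trace, which by \eqref{eq:update} is a miss for \emph{both} $\algorithm$ and $\algorithmprime$ regardless of the current configuration; and (ii) a \emph{repeat} of the most recently accessed block, which by \eqref{eq:update} leaves the content unchanged and is hence a hit for both algorithms. Using them I define, for each $1\le i\le \len$, the length-$\len$ trace
\[
u_i \;:=\; \trace'_{\le i}\cdot(\text{$s_i$ distinct fresh blocks})\cdot(\text{$\len-i-s_i$ repeats of the most recently accessed block}),\qquad s_i:=c-\algorithmprime(\trace'_{\le i}).
\]
Its number of $\algorithmprime$-misses is $\algorithmprime(\trace'_{\le i})+s_i=c$, so $u_i$ is $\algorithmprime$-equivalent to $\trace$, while its number of $\algorithm$-misses is $\algorithm(\trace'_{\le i})+s_i=c+g(i)$, where $g(i):=\algorithm(\trace'_{\le i})-\algorithmprime(\trace'_{\le i})$.

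The first point to verify is that the padding is always legal, i.e.\ $0\le s_i\le \len-i$. Nonnegativity holds because $\algorithmprime(\trace'_{\le i})$ is nondecreasing in $i$ and equals $c$ at $i=\len$; the upper bound holds because $s_i$ is exactly the number of $\algorithmprime$-misses in the discarded tail $\trace'_{i+1}\cdots\trace'_\len$, which cannot exceed the tail length $\len-i$. Next, each increment of $g$ equals the indicator that access $i{+}1$ is an $\algorithm$-miss minus the indicator that it is an $\algorithmprime$-miss, hence lies in $\{-1,0,1\}$; and $g$ runs from $g(1)=0$ (the first access is a miss for both) to $g(\len)=\algorithm(\trace')-c$. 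By the discrete intermediate value theorem the quantities $\algorithm(u_i)=c+g(i)$ therefore realize every integer between $c$ and $\algorithm(\trace')$.

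Finally I run the symmetric construction on $\trace$ in place of $\trace'$, obtaining $\algorithmprime$-equivalent length-$\len$ traces $v_i$ whose $\algorithm$-miss counts realize every integer between $c$ and $\algorithm(\trace)$. Both families realize the $\algorithm$-count $c$ (at $i=1$), so the two covered sets are integer intervals each having $c$ as an endpoint; since $\algorithm(\trace)\le\algorithm(\trace')$, their union contains the full interval $[\algorithm(\trace),\algorithm(\trace')]$, and every $k$ in this range is attained by some $\trace^*$, proving the lemma. The only mild hypothesis invoked is that $\blockset$ supplies enough fresh blocks (at most $\len$ are ever needed), which holds whenever $\blockset$ is infinite. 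I expect the genuinely delicate steps of the write-up to be the legality bound $0\le s_i\le\len-i$ and the observation that the interpolated quantity $g$ changes by at most one per prefix step; together these are exactly what let us keep $\algorithmprime$ pinned at $c$ while sweeping $\algorithm$ through every intermediate value.
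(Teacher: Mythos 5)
Your proof is correct and follows essentially the same route as the paper's: both rest on the discrete intermediate-value observation that the prefix-wise difference $\algorithm-\algorithmprime$ changes by at most one per access, then normalize $\algorithmprime$ back to $c$ by appending blocks that miss in both caches and pad to length $\len$ with hits on the last-accessed block, running the construction from $\trace$ and from $\trace'$ to cover the two subintervals meeting at $c$. The only (immaterial) difference is that you use globally fresh blocks for the padding, which needs $\sizeof{\blockset}$ on the order of $\len$, whereas the paper only requires blocks currently uncached in both algorithms, so that $\sizeof{\blockset}>\capacity_\algorithm+\capacity_\algorithmprime$ suffices.
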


\begin{proof}
  We begin with a continuity argument to identify a prefix of the
  trace $\trace$, which we later extend to $\trace^*$.
  For this, note that the difference in misses, $\algorithmprime-\algorithm$,
  between both algorithms on trace $\trace$ is initially zero,
  i.e. $\algorithmprime(\epsilon)-\algorithm(\epsilon)=0$, and increases or
  decreases by at most $1$ per added block, until it reaches
  $\algorithmprime(\trace)-\algorithm(\trace)$. We first consider the case
  $\algorithmprime(\trace)\ge\algorithm(\trace)$. For any $k$ with
  $0 \le \algorithmprime(\trace)-k\le \algorithmprime(\trace)-\algorithm(\trace)$, we hence
  find a prefix $\block_1\cdots \block_u$ of $\trace$ such that the value of
  $\algorithmprime-\algorithm$ on the prefix is exactly $\algorithmprime(\trace)-k$:
  \begin{equation}\label{eq:diff}
    \algorithmprime(\block_1\cdots \block_u)-\algorithm(\block_1\cdots \block_u)=\algorithmprime(\trace)-k.
  \end{equation}

We create a trace $\trace^*$ with prefix
$\block^*_1\dots \block_u^*=\block_1\dots \block_u$, which we extend by blocks $\block^*_{u+1}\dots \block^*_v$ that produce misses on both $\algorithm$ and
$\algorithmprime$ until
\begin{equation}\label{eq:pref}
\algorithmprime(\block_1^*\dots \block^*_v)=\algorithmprime(\trace)\ .
\end{equation}
For the blocks $\block^*_{u+1}\dots \block^*_v$ to miss they must be uncached in
both $\algorithm$ and $\algorithmprime$; such blocks can be found whenever
$\blockset$ is larger than the sum of the the capacities of both
algorithms.  We further extend $\block_1^*\dots \block^*_v$ with $\len-v$ copies
of $\block_v^*$ to the trace $\trace^*$ of length $\len$. Repeatedly accessing
$\block_v^*$ is guaranteed to produce hits on both $\algorithm$ and
$\algorithmprime$.

As the blocks $\block^*_{u+1}\dots \block^*_{\len}$ produce identical outputs on $\algorithm$
and $\algorithmprime$, the trace $\trace^*$ still satisfies~\eqref{eq:diff}, i.e.,
\begin{equation*}
\algorithmprime(\trace^*)-\algorithm(\trace^*)=\algorithmprime(\trace)-k\ .
\end{equation*}
Moreover, $\trace^*$ also still satisfies~\eqref{eq:pref}, i.e.,
$\algorithmprime(\trace^*)=\algorithmprime(\trace)$, from which we conclude that
$\trace^*$ is $\algorithmprime$-equivalent to $\trace$ and
$\algorithm(\trace^*)=k$.
Note that we only handled the case $\algorithm(\trace)\le\algorithmprime(\trace)$ so that $k\leq \algorithmprime(\trace)$.
The case $\algorithm(\trace)>\algorithmprime(\trace)$ where $k> \algorithmprime(\trace)$ proceeds in the same way but
extending a prefix of $\trace'$ instead of $\trace$ and reformulating~\eqref{eq:diff} to 
$\algorithm(\block'_1\cdots \block'_u) - \algorithmprime(\block'_1\cdots \block'_u)=k-\algorithmprime(\trace')$.
\end{proof}

\begin{exa}
  Consider the cache algorithms $\algorithm=$ LRU and $\algorithmprime=$ FIFO and the traces of memory blocks ABACACBBB and ABACBACBA, as in Example~\ref{ex:motivating}.  Both traces are FIFO-equivalent, however LRU$(\mathrm{ABACACBBB})=4$ and LRU$(\mathrm{ABACBACBA})=8$.  Then, following Lemma~\ref{lemm:set_seq}, there exist three traces of memory blocks that are FIFO-equivalent but where LRU produces between 5 and 7 misses, namely:
\begin{equation*}
\{\mathrm{ABACDAAAA},
\mathrm{ABACBADDD},
\mathrm{ABACBACBB}\}
\end{equation*}
The union of this set with the two initial traces yields the set $\traceset$ from Example~\ref{ex:motivating}.
\end{exa}

The proof of the tightness of the upper bound in
Proposition~\ref{prop:maxset} follows from the fact that every set $\traceset$
that satisfies equality in \eqref{eq:setquotient} contains within it a
subset $\traceset^*$ of $\algorithmprime$-equivalent traces that also satisfies
equality in~\eqref{eq:setquotient}. We show that this set $\traceset^*$ is
$\algorithm$-dense, which means that the elements $\trace,\trace'\in \traceset^*$ that
produce the maximal difference in misses under $\algorithm$ satisfy
$\algorithm(\trace)-\algorithm(\trace')=\leakrat_{\algorithm,\algorithmprime}-1$.

The following lemma shows how to find such a $\traceset^*$.
\begin{lem}
  Every set $\traceset\subseteq \blockset^\len$ that satisfies equality in
  \eqref{eq:setquotient} contains a $\algorithm$-dense subset of
  $\algorithmprime$-equivalent traces  that also satisfies equality in
  \eqref{eq:setquotient}.
\end{lem}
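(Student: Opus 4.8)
The plan is to produce the required subset $\traceset^*$ as a single $\algorithmprime$-equivalence class of $\traceset$, and then to derive its $\algorithm$-density from the upper bound already established in Proposition~\ref{prop:maxset}. Write $\leakrat=\leakrat_{\algorithm,\algorithmprime}(\len)$. For each $m\in\algorithmprime(\traceset)$ set $\traceset_m=\{\trace\in\traceset\mid\algorithmprime(\trace)=m\}$, so that the $\traceset_m$ partition $\traceset$ into exactly $\sizeof{\algorithmprime(\traceset)}$ nonempty classes, each consisting of mutually $\algorithmprime$-equivalent traces.

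First I would show that the classes attain equality in \eqref{eq:setquotient}. Since every $\traceset_m$ is $\algorithmprime$-equivalent we have $\sizeof{\algorithmprime(\traceset_m)}=1$, so applying \eqref{eq:setquotient} to the subset $\traceset_m$ gives $\sizeof{\algorithm(\traceset_m)}\le\leakrat$. On the other hand $\algorithm(\traceset)=\bigcup_m\algorithm(\traceset_m)$, so the union bound together with the hypothesis $\sizeof{\algorithm(\traceset)}=\leakrat\cdot\sizeof{\algorithmprime(\traceset)}$ yields $\sum_m\sizeof{\algorithm(\traceset_m)}\ge\sizeof{\algorithm(\traceset)}=\leakrat\cdot\sizeof{\algorithmprime(\traceset)}$. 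As the sum ranges over $\sizeof{\algorithmprime(\traceset)}$ terms, each at most $\leakrat$, we also have $\sum_m\sizeof{\algorithm(\traceset_m)}\le\leakrat\cdot\sizeof{\algorithmprime(\traceset)}$; both inequalities are therefore tight, forcing $\sizeof{\algorithm(\traceset_m)}=\leakrat$ for every $m$. Hence any class $\traceset^*:=\traceset_m$ is an $\algorithmprime$-equivalent subset of $\traceset$ satisfying equality in \eqref{eq:setquotient}. (If one prefers, a single class with this property is already delivered by a cruder averaging argument; the tight double count above is only a convenience.)

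It remains to verify that such a class $\traceset^*$ is $\algorithm$-dense, which is where the content lies. Let $a=\min\algorithm(\traceset^*)$ and $b=\max\algorithm(\traceset^*)$, attained by traces $\trace,\trace'\in\traceset^*$. These are $\algorithmprime$-equivalent, so the upper bound \eqref{eq:sequences} of Proposition~\ref{prop:maxset} gives $b-a=\algorithm(\trace')-\algorithm(\trace)\le\leakrat-1$. But $\algorithm(\traceset^*)$ consists of $\sizeof{\algorithm(\traceset^*)}=\leakrat$ distinct natural numbers, all lying in $\{a,a+1,\dots,b\}$, a set of cardinality $b-a+1\le\leakrat$. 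The only way $\leakrat$ distinct values fit into $b-a+1\le\leakrat$ slots is if $b-a+1=\leakrat$ and $\algorithm(\traceset^*)=\{a,a+1,\dots,b\}$; hence $\algorithm(\traceset^*)$ is contiguous and $\traceset^*$ is $\algorithm$-dense.

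The step I expect to require the most care is not a calculation but a matter of bookkeeping: the density argument invokes the upper bound \eqref{eq:sequences}, so I must confirm that \eqref{eq:sequences} is derived independently of the present lemma, lest the whole chain be circular. This is indeed the case, since \eqref{eq:sequences} follows from Lemma~\ref{lemm:set_seq} alone (through the $\algorithm$-dense, $\algorithmprime$-equivalent witness set it builds), whereas the present lemma is used only afterwards to establish the tightness (``moreover'') half of Proposition~\ref{prop:maxset}. With that checkpoint in place, the decisive move is the pinching of the range $b-a$ to exactly $\leakrat-1$, which is precisely what upgrades a class that merely attains the ratio into one whose $\algorithm$-image has no gaps.
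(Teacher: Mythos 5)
Your proof is correct, and its first half---partitioning $\traceset$ into $\algorithmprime$-equivalence classes and using a counting argument to exhibit a class $\traceset^*$ with $\sizeof{\algorithm(\traceset^*)}=\leakrat_{\algorithm,\algorithmprime}(\len)$---is essentially the paper's argument (the paper averages and picks the class maximizing $\sizeof{\algorithm(\traceset_j)}$; your tight double count additionally shows \emph{every} class attains the bound, which is slightly stronger but not needed). Where you diverge is the density step. The paper argues by contradiction directly from Lemma~\ref{lemm:set_seq}: if $\algorithm(\traceset^*)$ had a gap, that lemma would supply a $\algorithmprime$-equivalent trace filling it, yielding a set that violates the definition of $\leakrat_{\algorithm,\algorithmprime}$ as an upper bound. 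You instead invoke the already-proved upper bound \eqref{eq:sequences} to pinch $\max\algorithm(\traceset^*)-\min\algorithm(\traceset^*)\le\leakrat_{\algorithm,\algorithmprime}(\len)-1$ and conclude contiguity by pigeonhole. Both routes bottom out in Lemma~\ref{lemm:set_seq} (which is what proves \eqref{eq:sequences}), so they are close in substance; your version has the small advantage of not re-running the construction, at the cost of the dependency check you rightly perform---and your verification that \eqref{eq:sequences} is established independently of the present lemma (it is: only the ``moreover'' half of Proposition~\ref{prop:maxset} needs this lemma) is accurate, so there is no circularity.
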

\begin{proof}
  We partition $\traceset=\traceset_1\uplus\dots\uplus \traceset_k,$ into classes
  of $\algorithmprime$-equivalent traces. Without loss of generality
  assume that $\algorithm(\traceset_1)\geq \algorithm(\traceset_j)$, for $j>1$. Then we
  have:
\begin{equation*}
  \frac{\sizeof{\algorithm(\traceset)}}{\sizeof{\algorithmprime(\traceset)}}
  \leq \frac{\sum_{j=1}^k
    \sizeof{\algorithm(\traceset_j)}}{\sum_{j=1}^k
    \sizeof{\algorithmprime(\traceset_j)}} = \frac{\sum_{j=1}^k
    \sizeof{\algorithm(\traceset_j)}}{\sum_{j=1}^k
    1} \stackrel{(*)}{\leq}\sizeof{\algorithm(\traceset_1)}\ ,
\end{equation*}
where $(*)$ follows from the fact that, for any sequence of natural numbers $a_1,\dots,a_k$, $\sum_{j=1}^k a_j \leq k\max(a_1,\dots, a_k)$.
As a consequence, $\traceset_1$ also satisfies
$\sizeof{\algorithm(\traceset_1)}=\leakrat_{\algorithm,\algorithmprime}(\len)$. Moreover,
$\algorithm(\traceset_1)$ is a contiguous set of natural numbers. If it were not,
we could apply Lemma~\ref{lemm:set_seq} to augment $\traceset_1$ by a trace that
produces the missing number of observations, contradicting that
$\leakrat_{\algorithm,\algorithmprime}$ is an upper bound.
\end{proof}

\section{Approximation of the Leak Ratio from a Single Trace}\label{sec:approxsingletrace}
In Section~\ref{sec:sequences} we have seen that the leak ratio of two cache algorithms, which is defined as a property of arbitrary sets of traces, is fully characterized by a pair of traces.
In this section, we show that the leak ratio can be approximated to within a factor of 2 using a single trace.

\begin{lem}\label{lem:singletrace}
	Let $\trace \in \blockset^\len$ be an arbitrary trace.
	Then, there is a trace $\trace' \in \blockset^\len$ with $\algorithm(\trace') = \algorithmprime(\trace') = \algorithmprime(\trace)$.
\end{lem}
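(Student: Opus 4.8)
The plan is to build $\trace'$ from scratch rather than by modifying $\trace$, reusing the device from the proof of Lemma~\ref{lemm:set_seq}: produce the required number of misses using freshly chosen blocks, then pad with hits. Write $m := \algorithmprime(\trace)$ for the target common miss count. The one elementary fact I rely on is that $m \le \len$, since a trace of length $\len$ incurs at most one miss per access and hence at most $\len$ misses in total. This leaves just enough room to realize exactly $m$ misses and then fill up to length $\len$ with hits.

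First I would construct a prefix $\block_1 \cdots \block_m$ that misses on \emph{both} $\algorithm$ and $\algorithmprime$ at every step. Proceeding greedily: once $\block_1 \cdots \block_{i-1}$ is fixed, the contents of the two caches jointly occupy at most $\capacity_\algorithm + \capacity_\algorithmprime$ blocks, so as long as $\blockset$ is larger than this sum I can choose $\block_i$ lying outside both caches; by the update rule~\eqref{eq:update}, accessing such a $\block_i$ is a miss in each algorithm. After these $m$ accesses, both $\algorithm$ and $\algorithmprime$ have recorded exactly $m$ misses.

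Next I would pad this prefix to length $\len$ by appending $\len - m$ copies of the last block $\block_m$. Since $\block_m$ has just been accessed it is cached in both algorithms, and a hit leaves the content unchanged by~\eqref{eq:update}, so every one of these repeated accesses is again a hit in both. The resulting trace $\trace'$ then has length $\len$ and satisfies $\algorithm(\trace') = \algorithmprime(\trace') = m = \algorithmprime(\trace)$, as required. The degenerate case $\len = 0$ (equivalently $m = 0$) is dispatched by taking $\trace' = \trace = \epsilon$.

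I do not expect a genuine obstacle here, as the construction is essentially routine; the point to state carefully is the cardinality hypothesis on $\blockset$ (namely $\sizeof{\blockset} > \capacity_\algorithm + \capacity_\algorithmprime$), which guarantees a perpetual supply of commonly-missing blocks and is already invoked for Lemma~\ref{lemm:set_seq}. Notably, unlike Lemma~\ref{lemm:set_seq}, no continuity argument over prefixes of $\trace$ is needed: because the only constraints are to match the scalar $\algorithmprime(\trace)$ and to equalize the two algorithms, I can ignore the internal structure of $\trace$ entirely and assemble $\trace'$ independently.
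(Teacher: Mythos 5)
Your construction is correct and is essentially the paper's own proof: both build $\trace'$ as a length-$\algorithmprime(\trace)$ prefix of fresh blocks that miss in both caches (possible since at most $\capacity_\algorithm+\capacity_\algorithmprime$ blocks are cached at any time), followed by $\len-\algorithmprime(\trace)$ repeated accesses to the last such block, which hit in both. Your explicit remarks that $\algorithmprime(\trace)\le\len$ and that $\algorithmprime(\trace)=0$ forces $\len=0$ (so the padding block always exists when needed) are fine and only make the argument slightly more careful than the paper's.
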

\begin{proof}
		We construct the trace $\trace' \in \blockset^\len$ as the concatenation of two subtraces $\trace'_\textit{miss}$ and $\trace'_\textit{hit}$:
	$\trace'_\textit{miss}$ is a trace of length $\algorithmprime(\trace)$ in which all accesses are chosen such that they result in misses in both $\algorithm$ and $\algorithmprime$.
		This is always possible, as there are at most $\capacity_\algorithm+\capacity_\algorithmprime$ blocks cached in $\algorithm$ and $\algorithmprime$ at any time and accesses to any other block will result in a miss.
		Let $\block \in \blockset$ be the final access in $\trace'_\textit{miss}$.
		Independently of the cache algorithm, $\block$ must be cached in both $\algorithm$ and $\algorithmprime$ following $\trace'_\textit{miss}$.
		The second subtrace $\trace'_\textit{hit}$ then simply consists of $|\trace|-\algorithmprime(\trace)$ accesses to $\block$, which will result in hits in both $\algorithm$ and $\algorithmprime$.
\end{proof}

The following corollary of the previous lemma and of Proposition~\ref{prop:maxset} shows that the leakage ratio is ``almost'' a trace property, as it can be approximated to within a factor of two based on the number of misses of $\algorithm$ and $\algorithmprime$ on a single trace:
\begin{cor}\label{cor:singletrace}
	For all pairs of cache algorithms $\algorithm$ and $\algorithmprime$, all lengths $\len$, and all traces of memory blocks $\trace \in \blockset^\len$:
	\begin{equation}\label{eq:traceprop}
			|\algorithm(\trace)-\algorithmprime(\trace)| \leq \leakrat_{\algorithm,\algorithmprime}(\len)-1.
	\end{equation}
	Moreover, there exists a trace $\trace \in \blockset^\len$ such that: \[\frac{\leakrat_{\algorithm,\algorithmprime}(\len)-1}{2} \leq |\algorithm(\trace)-\algorithmprime(\trace)|.\]
\end{cor}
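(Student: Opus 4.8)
The plan is to prove both claims by reducing these single-trace statements to the two-trace characterization of Proposition~\ref{prop:maxset}, with Lemma~\ref{lem:singletrace} serving as the bridge between them.

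For the upper bound~\eqref{eq:traceprop}, I would take an arbitrary trace $\trace \in \blockset^\len$ and apply Lemma~\ref{lem:singletrace} to produce a companion trace $\trace' \in \blockset^\len$ satisfying $\algorithm(\trace') = \algorithmprime(\trace') = \algorithmprime(\trace)$. Since $\algorithmprime(\trace') = \algorithmprime(\trace)$, the traces $\trace$ and $\trace'$ are $\algorithmprime$-equivalent, so Proposition~\ref{prop:maxset} applies to them. Because additionally $\algorithm(\trace') = \algorithmprime(\trace)$, we have $|\algorithm(\trace) - \algorithmprime(\trace)| = |\algorithm(\trace) - \algorithm(\trace')|$; that is, the left-hand side of~\eqref{eq:traceprop} is nothing but the difference in $\algorithm$-misses between two $\algorithmprime$-equivalent traces. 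Instantiating the inequality of Proposition~\ref{prop:maxset} with the larger of $\algorithm(\trace)$ and $\algorithm(\trace')$ in the role of $\trace_2$ then gives $|\algorithm(\trace) - \algorithm(\trace')| \le \leakrat_{\algorithm,\algorithmprime}(\len) - 1$, which is exactly~\eqref{eq:traceprop}. This direction is routine; the only point to verify is that Lemma~\ref{lem:singletrace} legitimately lets me replace one member of the $\algorithmprime$-equivalent pair by a trace on which $\algorithm$ and $\algorithmprime$ agree.

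For the existence claim, I would invoke the tightness clause of Proposition~\ref{prop:maxset} to obtain $\algorithmprime$-equivalent traces $\trace_1, \trace_2 \in \blockset^\len$ with $\algorithm(\trace_2) - \algorithm(\trace_1) = \leakrat_{\algorithm,\algorithmprime}(\len) - 1$; set $d = \leakrat_{\algorithm,\algorithmprime}(\len) - 1$ and $q = \algorithmprime(\trace_1) = \algorithmprime(\trace_2)$. The crux is that the two signed gaps $\algorithm(\trace_1) - q$ and $\algorithm(\trace_2) - q$ differ by exactly $d$, and hence cannot both have absolute value below $d/2$: were both $|\algorithm(\trace_1) - q| < d/2$ and $|\algorithm(\trace_2) - q| < d/2$, the triangle inequality would force $d = \algorithm(\trace_2) - \algorithm(\trace_1) \le |\algorithm(\trace_2) - q| + |\algorithm(\trace_1) - q| < d$, a contradiction. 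Whichever of $\trace_1, \trace_2$ realizes the larger gap then serves as the desired witness $\trace$, with $|\algorithm(\trace) - \algorithmprime(\trace)| \ge d/2 = (\leakrat_{\algorithm,\algorithmprime}(\len) - 1)/2$.

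I expect the main conceptual point to be this last step, which is also where the factor of $2$ originates. The two-trace witness of Proposition~\ref{prop:maxset} pins the leak ratio down exactly, but any single trace can only account for the distance from its own $\algorithm$-value to the shared $\algorithmprime$-value $q$. In the worst case the two $\algorithm$-values sit symmetrically on either side of $q$, so each lies only $d/2$ away from it; the triangle-inequality argument above confirms that $d/2$ is the best guarantee extractable from a single trace. I do not anticipate a genuine obstacle beyond making this symmetric worst case explicit and checking that the loss of a factor of $2$ is intrinsic rather than an artifact of the reduction.
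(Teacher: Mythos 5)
Your proposal is correct and follows essentially the same route as the paper: the upper bound via Lemma~\ref{lem:singletrace} producing a $\algorithmprime$-equivalent companion trace on which $\algorithm$ and $\algorithmprime$ agree, and the lower bound via the tight pair from Proposition~\ref{prop:maxset} together with the observation that at least one of the two traces must lie at distance at least $(\leakrat_{\algorithm,\algorithmprime}(\len)-1)/2$ from the common value $q$. Your triangle-inequality phrasing of that last step is just a restated version of the paper's ``either $2\cdot|\algorithm(\trace_2)-q|\geq\algorithm(\trace_2)-\algorithm(\trace_1)$ or $2\cdot|\algorithm(\trace_1)-q|\geq\algorithm(\trace_2)-\algorithm(\trace_1)$'' case split.
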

\begin{proof}
	Let $	\trace \in \blockset^\len$ be an arbitrary trace.
	By Lemma~\ref{lem:singletrace}, there is a trace $\trace'$ such that $\algorithm(\trace')=\algorithmprime(\trace')=\algorithmprime(\trace)$.
	So $\trace$ and $\trace'$ are \algorithmprime-equivalent.
	Thus, by Proposition~\ref{prop:maxset}, we have both
	\begin{equation*}
		\algorithm(\trace) - \algorithm(\trace')  \leq \leakrat_{\algorithm,\algorithmprime}(\len)-1 \quad
                                         \textit{ and }\quad \algorithm(\trace') - \algorithm(\trace)  \leq \leakrat_{\algorithm,\algorithmprime}(\len)-1,	
	\end{equation*}
	which implies that $|\algorithm(\trace)-\algorithmprime(\trace)|  = |\algorithm(\trace)-\algorithm(\trace')|  \leq \leakrat_{\algorithm,\algorithmprime}(\len)-1$.
	
	By Proposition~\ref{prop:maxset}, there is a pair of \algorithmprime-equivalent traces $\trace_1, \trace_2 \in \blockset^\len$ such that:
	\[\algorithm(\trace_2) - \algorithm(\trace_1) = \leakrat_{\algorithm,\algorithmprime}(\len)-1.\]
	Let $q = \algorithmprime(\trace_1) = \algorithmprime(\trace_2)$.
	Either $2\cdot |\algorithm(\trace_2) - q| \geq {\algorithm(\trace_2)-\algorithm(\trace_1)}$ or $2\cdot |\algorithm(\trace_1) - q| \geq {\algorithm(\trace_2)-\algorithm(\trace_1)}$, where equality is achieved on one of the two inequalities if $q$ is centered between $\algorithm(\trace_1)$ and $\algorithm(\trace_2)$.
	Assume that $2\cdot |\algorithm(\trace_2) - q| \geq {\algorithm(\trace_2)-\algorithm(\trace_1)}$.
	Then $|\algorithm(\trace_2) - \algorithmprime(\trace_2)| \geq \frac{\algorithm(\trace_2)-\algorithm(\trace_1)}{2} = \frac{\leakrat_{\algorithm,\algorithmprime}-1}{2}$.
	Otherwise, $|\algorithm(\trace_1) - \algorithmprime(\trace_1)| \geq \frac{\algorithm(\trace_2)-\algorithm(\trace_1)}{2} = \frac{\leakrat_{\algorithm,\algorithmprime}-1}{2}$.
\end{proof}

\begin{thm}\label{thm:nodom}
	For all pairs of cache algorithms $\algorithm$ and $\algorithmprime$ and all lengths $\len$:
		\[\leakrat_{\algorithm,\algorithmprime}(\len) \leq 2\cdot \leakrat_{\algorithmprime,\algorithm}(\len)-1\]
\end{thm}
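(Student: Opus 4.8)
The plan is to derive the inequality as an immediate combination of the two halves of Corollary~\ref{cor:singletrace}, exploiting the fact that the quantity $|\algorithm(\trace)-\algorithmprime(\trace)|$ is symmetric under exchanging $\algorithm$ and $\algorithmprime$. The corollary supplies exactly two facts: an upper bound on $|\algorithm(\trace)-\algorithmprime(\trace)|$ by $\leakrat_{\algorithm,\algorithmprime}(\len)-1$ that holds for \emph{every} trace, and a \emph{single} witnessing trace on which this difference is at least $(\leakrat_{\algorithm,\algorithmprime}(\len)-1)/2$. The key realization is that these two bounds can be made to refer to \emph{different} orderings of the same pair of algorithms.

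First I would apply the lower-bound half of Corollary~\ref{cor:singletrace} to the ordered pair $\algorithm,\algorithmprime$ to fix a trace $\trace\in\blockset^\len$ with
\[
\frac{\leakrat_{\algorithm,\algorithmprime}(\len)-1}{2}\le |\algorithm(\trace)-\algorithmprime(\trace)|.
\]
Then I would apply the upper-bound half of the same corollary, but to the \emph{swapped} ordered pair $\algorithmprime,\algorithm$, which yields $|\algorithmprime(\trace)-\algorithm(\trace)|\le \leakrat_{\algorithmprime,\algorithm}(\len)-1$ for this very $\trace$. Since $|\algorithmprime(\trace)-\algorithm(\trace)|=|\algorithm(\trace)-\algorithmprime(\trace)|$, chaining the two inequalities gives $\frac{\leakrat_{\algorithm,\algorithmprime}(\len)-1}{2}\le \leakrat_{\algorithmprime,\algorithm}(\len)-1$, and clearing the denominator and rearranging produces exactly $\leakrat_{\algorithm,\algorithmprime}(\len)\le 2\cdot\leakrat_{\algorithmprime,\algorithm}(\len)-1$.

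Because the argument is a two-step chaining of already-established bounds, there is no genuine computational obstacle; the only delicate point is conceptual, namely noticing that the symmetry of the absolute value lets the \emph{lower} bound for direction $\algorithm,\algorithmprime$ and the \emph{upper} bound for direction $\algorithmprime,\algorithm$ be evaluated on one common trace. I would take care only with the bookkeeping of the additive $-1$ terms and the factor of $2$ when clearing the denominator, since that is the only place an off-by-one slip could occur.
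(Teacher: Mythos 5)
Your proposal is correct and is essentially the paper's own proof: both combine the lower-bound half of Corollary~\ref{cor:singletrace} for one ordering with the upper-bound half for the other ordering on a single common trace, using $|\algorithm(\trace)-\algorithmprime(\trace)|=|\algorithmprime(\trace)-\algorithm(\trace)|$. The only cosmetic difference is that you instantiate the two halves so that the conclusion comes out directly in the stated orientation, whereas the paper's derivation literally yields the inequality with $\algorithm$ and $\algorithmprime$ swapped and relies on the universal quantification over pairs.
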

\begin{proof}
	By Corollary~\ref{cor:singletrace}, there is a trace $\trace \in \blockset^\len$, such that
	\[\frac{\leakrat_{\algorithmprime,\algorithm}(\len)-1}{2} \leq |\algorithmprime(\trace) - \algorithm(\trace)| = |\algorithm(\trace) - \algorithmprime(\trace)| \leq \leakrat_{\algorithm,\algorithmprime}(\len)-1.\]
	Multiplying both sides by $2$ and adding $1$ finish the proof.
\end{proof}

\section{A Linear Lower Bound on the Leak Ratio}\label{sec:lowbound}
In this section, we show that if the difference in misses between two
cache algorithms is unbounded, then there
are traces on which the difference in misses grows linearly in the
length of the trace. Together with the result from the previous
section, this implies that the leak ratio between two algorithms grows
linearly in the length of the trace if and only if the difference
between the two algorithms is unbounded.
This result does not hold for arbitrary caches conforming to the model introduced in Section~\ref{sec:prelim}.
We need to make two additional assumptions:
\begin{enumerate}
	\item We assume the set of control states $\controlstateset_\algorithm$ of a cache algorithm to be finite. This is naturally the case for hardware-based caches that maintain a finite set of status bits to guide future eviction decisions.
	\item We assume that the \emph{evict} function,  $\evict_\algorithm : \controlstateset_\algorithm \times \blockset \rightarrow \controlstateset_\algorithm \times \{0, \dots, \capacity_\algorithm-1\}$ is independent of its second parameter, i.e., $\evict_\algorithm(\controlstate, \block) = \evict_\algorithm(\controlstate,\block')$ for all $\controlstate \in \controlstateset_\algorithm$ and $\block, \block' \in \blockset$.
		This assumption is naturally fulfilled by fully-associative caches, where there is no restriction on the placement of a memory block based on its address.
		This assumption could be significantly weakened at the expense of a more complicated proof.\footnote{A weaker, yet sufficient condition would be that there is a finite partition of $\blockset$, such that $\evict_\algorithm(\controlstate, \block) = \evict_\algorithm(\controlstate,\block')$ for all $\controlstate \in \controlstateset_\algorithm$ and $\block, \block'$ that are in the same block of the partition. This weaker assumption is fulfilled by arbitrary set-associative caches.}
\end{enumerate}

For the proof of the result we argue that, while there is an unbounded
number of different cache configurations, even assuming an unbounded
supply of memory blocks $\blockset$, there are only finitely many
``non-congruent'' pairs of cache configurations, where congruent will
be defined precisely below.  Intuitively, congruent pairs of cache
configurations behave similarly to each other, if their cache contents
are appropriately renamed.

Such a renaming can be captured by a bijection.
Let $\pi : \blockset \rightarrow \blockset$ be a bijection on memory blocks and let $\pi^*$ denote its extension to cache contents that maps $\invalid$ to $\invalid$:
\[\pi^*(\content) = \lambda \len.\begin{cases}
						\pi(\content(\len))	& : \content(\len) \in \blockset\\
						\invalid		& : \content(\len) = \invalid
					\end{cases}\]
We also lift $\pi$ to cache configurations with $\pi^*(\controlstate,\content) = (\controlstate, \pi^*(\content))$ and to traces with $\pi^*(\epsilon) = \epsilon$ and $\pi^*(\block \circ \trace) = \pi(\block) \circ \pi^*(\trace)$.

Let $(\controlstate,\content)$ be an arbitrary cache configuration.
Observe that:
\begin{equation}\label{eq:congruentsuccessors}
	\forall \trace \in \blockset^*: \pi^*(\upconfig_\algorithm((\controlstate,\content), \trace)) = \upconfig_\algorithm(\pi^*(\controlstate,\content), \pi^*(\trace)),
\end{equation}
i.e. renamed cache configurations behave the same on renamed accesses. Also observe that:
\begin{equation}
	\miss_\algorithm(\block, \content) = \miss_\algorithm(\pi(\block), \pi^*(\content)),
\end{equation}
which holds because $\pi(\block)$ is contained in $\pi^*(\content)$ if and only if $\block$ is contained in $\content$.
From these two observations, it follows that:
\begin{equation}\label{eq:congruentmisses}
	\algorithm((\controlstate,\content), \trace) = \algorithm(\pi^*(\controlstate,\content), \pi(\trace)).
\end{equation}

\begin{defi}[Congruent cache configurations]
	Two pairs of cache configurations $(\config_\algorithm, \config_\algorithmprime)$ and $(\config_\algorithm', \config_\algorithmprime')$ are \emph{congruent}, denoted by $(\config_\algorithm, \config_\algorithmprime)) \equiv (\config_\algorithm', \config_\algorithmprime')$, if there is a bijection $\pi : \blockset \rightarrow \blockset$, such that $\config_\algorithm' = \pi^*(\config_\algorithm)$ and $\config_\algorithmprime'= \pi^*(\config_\algorithmprime)$.
	To indicate a bijection $\pi$ that is a witness to the congruence of two pairs of cache configurations we also write $(\config_\algorithm,\config_\algorithmprime) \equiv_\pi (\config_\algorithm',\config_\algorithmprime')$.
	
Note that congruence is an equivalence relation.
We denote the equivalence class of a pair of cache configuration $(\config_\algorithm,\config_\algorithmprime)$ by \[[\config_\algorithm,\config_\algorithmprime] := \{(\config_\algorithm',\config_\algorithmprime') \in \configurations_\algorithm \times \configurations_\algorithmprime \mid (\config_\algorithm',\config_\algorithmprime') \equiv (\config_\algorithm,\config_\algorithmprime)\}.\]
\end{defi}

While the set of pairs of cache configurations is infinite, its quotient w.r.t. to the congruence relation is finite:
\begin{thm}[Index of $\equiv$]\label{thm:finiteindex}
	Let $\algorithm$ and $\algorithmprime$ be two finite-control-state cache algorithms.
	Then, the quotient 
	\[\configurations_\algorithm\times\configurations_\algorithmprime/_\equiv\, = \{ [\config_\algorithm,\config_\algorithmprime] \mid (\config_\algorithm,\config_\algorithmprime) \in \configurations_\algorithm \times \configurations_\algorithmprime \}\]
	is finite.
\end{thm}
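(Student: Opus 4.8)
The plan is to reduce the finiteness of the quotient to a counting argument by isolating the only two sources of variation in a pair of configurations: the control states, which range over the finite sets $\controlstateset_\algorithm$ and $\controlstateset_\algorithmprime$, and the cache contents, which range over the infinite block set $\blockset$ but only through their equality pattern up to renaming. Since the bijection $\pi^*$ acts as the identity on the control component (by the definition $\pi^*(\controlstate,\content)=(\controlstate,\pi^*(\content))$), congruent pairs $(\config_\algorithm,\config_\algorithmprime)$ and $(\config_\algorithm',\config_\algorithmprime')$ must agree on both control states. Hence it suffices to bound, for each fixed pair of control states, the number of congruence classes of the content pair $(\content_\algorithm,\content_\algorithmprime)$ under a common block-renaming.

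The key step is to observe that a \emph{single} bijection $\pi$ must witness congruence for both caches simultaneously, so I would treat the two contents as one combined content. Writing $N=\capacity_\algorithm+\capacity_\algorithmprime$, I view the pair $(\content_\algorithm,\content_\algorithmprime)$ as a single function $d$ on the disjoint union of the two line-index sets, a set of $N$ elements. By construction, $\pi$ maps $(\content_\algorithm,\content_\algorithmprime)$ to $(\content_\algorithm',\content_\algorithmprime')$ exactly when $\pi^*$ sends the combined content $d$ of the first pair to the combined content $d'$ of the second, which is precisely what lets us reason about a single object.

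I would then extract the right renaming-invariant: the \emph{pattern} of $d$, consisting of the set $\{i\mid d(i)=\invalid\}$ of invalid lines together with the equivalence relation ``$d(i)=d(j)\neq\invalid$'' on the valid lines. Because $\pi^*$ fixes $\invalid$ and $\pi$ is injective, two congruent combined contents necessarily share the same pattern. For the converse, given $d$ and $d'$ with identical patterns, the map sending each block occurring in $d$ to the corresponding block of $d'$ is a well-defined bijection between two finite subsets of $\blockset$ of equal size (well-definedness and injectivity follow from the equality patterns agreeing); extending it to a bijection of all of $\blockset$—possible since the two complements have equal cardinality—yields a witness $\pi$. Thus congruence classes of combined contents correspond exactly to patterns.

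Finally I would count. There are only finitely many patterns on $N$ lines: assigning each line either $\invalid$ or a label determined by the order of first block-occurrence gives a canonical representative in the set of functions $\{0,\dots,N-1\}\to\{0,\dots,N\}$, so the number of patterns is at most $(N+1)^N$. Multiplying by the $\sizeof{\controlstateset_\algorithm}\cdot\sizeof{\controlstateset_\algorithmprime}$ choices of control states bounds the index of $\equiv$ by a finite number. The main obstacle is not the counting but the bookkeeping around the two directions of the pattern characterization—in particular verifying that a common renaming exists for both caches at once, which is exactly why the contents must be merged into $d$ before the invariant is extracted.
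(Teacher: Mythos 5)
Your proof is correct, but it takes a genuinely different route from the paper's. The paper's argument is a normal-form construction: it fixes a finite block set $B_{\algorithm,\algorithmprime}\subseteq\blockset$ of size $\capacity_\algorithm+\capacity_\algorithmprime$ and shows, by incrementally assembling a bijection out of transpositions, that every pair of configurations is congruent to one whose contents draw only on $B_{\algorithm,\algorithmprime}$; finiteness then follows by counting these canonical representatives together with the finitely many control-state pairs. You instead exhibit a complete invariant: after noting that congruence preserves control states (since $\pi^*(\controlstate,\content)=(\controlstate,\pi^*(\content))$), you merge the two contents into a single function on $N=\capacity_\algorithm+\capacity_\algorithmprime$ lines and show that its ``pattern'' --- the set of invalid lines together with the equality relation on the valid ones --- characterizes the congruence class exactly, and then count patterns. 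Your route obliges you to prove both directions of this characterization, and you handle the only delicate points correctly: well-definedness and injectivity of the partial block-matching follow from the patterns agreeing, and the extension to a bijection of all of $\blockset$ works because the two complements have equal cardinality whether $\blockset$ is finite or infinite. The paper only needs the single direction ``every class has a representative over a finite alphabet,'' so its proof is shorter; in exchange your version yields an exact decision criterion for congruence and an explicit bound of $\sizeof{\controlstateset_\algorithm}\cdot\sizeof{\controlstateset_\algorithmprime}\cdot(N+1)^N$, which in fact coincides with the count of the paper's representatives. Both arguments are sound.
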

\begin{proof}
\newcommand{\bpq}{B_{\algorithm,\algorithmprime}}
	Remember that $\capacity_\algorithm$ and $\capacity_\algorithmprime$ denote the capacities of $\algorithm$ and $\algorithmprime$.
	Let $\bpq$ be an arbitrary but fixed subset of $\blockset$, such that $|\bpq| = \capacity_\algorithm+\capacity_\algorithmprime$.
	
	We show below that each pair $((\controlstate_\algorithm,\content_\algorithm),(\controlstate_\algorithmprime,\content_\algorithmprime))$ of cache configurations is congruent to a pair of cache configurations $((\controlstate_\algorithm,\content_\algorithm'),(\controlstate_\algorithmprime,\content_\algorithmprime'))$ in which only blocks from $\bpq$ may occur in the cache contents $\content_\algorithm'$ and $\content_\algorithmprime'$.
	As $\bpq$ is finite, there are only finitely many different cache contents $\content_\algorithm'$ and $\content_\algorithmprime'$ containing only blocks from $\bpq$.
	The sets of control states $\controlstateset_\algorithm$ and $\controlstateset_\algorithmprime$ are finite by assumption.
	Together, this implies that the set of equivalence classes of $\equiv$ is finite.	
	
	Below we show how to incrementally construct a bijection $\pi : \blockset \rightarrow \blockset$ such that the contents of $\content_\algorithm' = \pi^*(\content_\algorithm)$ and $\content_\algorithmprime' = \pi^*(\content_\algorithmprime)$ contain only blocks from $\bpq$:
	\begin{enumerate}
		\item Initially, let $\pi$ be the identity function on $\blockset$, and let $D = \bpq$.
		\item For $i = 0, \dots, \capacity_\algorithm-1$:\\
		If $\content_\algorithm(i) \in \bpq$ then modify $D$ to $D = D \setminus \{\content_\algorithm(i)\}$.
		\item For $j= 0, \dots, \capacity_\algorithmprime-1$:\\
		If $\content_\algorithmprime(j) \in \bpq$ then modify $D$ to $D = D \setminus \{\content_\algorithmprime(j)\}$.
		\item For $i = 0, \dots, \capacity_\algorithm-1$:\\
			If $\content_\algorithm(i) \not= \invalid$ and $\pi(\content_\algorithm(i)) \not\in \bpq$, then pick $\block \in D$ and modify $\pi$ and $D$ as follows:\\
				\hspace{5cm}$\quad\quad\pi = \pi[\content_\algorithm(i) \mapsto \block][\block \mapsto \content_\algorithm(i)]$ and $D = D \setminus \{\block\}$.
		\item For $j = 0, \dots, \capacity_\algorithmprime-1$:\\
			If $\content_\algorithmprime(j) \not= \invalid$ and $\pi(\content_\algorithmprime(j)) \not\in \bpq$, then pick $\block \in D$ and modify $\pi$ and $D$ as follows:\\
				\hspace{5cm}$\quad\quad\pi = \pi[\content_\algorithmprime(j) \mapsto \block][\block \mapsto \content_\algorithmprime(j)]$ and $D = D \setminus \{\block\}$.
	\end{enumerate}
	Note that there is always a $\block \in D$ available, when the above algorithm needs one, because the operation is applied at most $|\bpq| = \capacity_\algorithm + \capacity_\algorithmprime$ times.
	Throughout its execution, the algorithm maintains the invariant that $\pi$ is a bijection.
	Further, the resulting bijection satisfies $\pi^*(\content_\algorithm), \pi^*(\content_\algorithmprime) \subseteq \bpq \cup \{\invalid\}$.
\end{proof}

We can exploit Theorem~\ref{thm:finiteindex} in a manner similar to the application of the pumping lemma for regular languages in the proof of the following theorem.
\begin{thm}\label{thm:lineardiff}
	Let $\algorithm$ and $\algorithmprime$ be two finite-control-state cache algorithms.
	Further, let the difference in misses between $\algorithm$ and $\algorithmprime$ be unbounded, i.e.,
	 \[\forall m \in \NN: \exists \trace \in \blockset^*: |\algorithm(\trace) - \algorithmprime(\trace)| > m.\]
	 Then, there is an $f \in \RR, f > 0$ and an $m_0 \in \NN$, such that
	 \[\forall m \in \NN, m > m_0: \exists \trace \in \blockset^m: |\algorithm(\trace) - \algorithmprime(\trace)| > f\cdot |\trace|.\]
\end{thm}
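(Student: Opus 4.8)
The plan is to combine the finiteness of the congruence quotient (Theorem~\ref{thm:finiteindex}) with a pumping argument over pairs of configurations, in close analogy with the pumping lemma for regular languages. Throughout I fix the two algorithms and let $N$ be the (finite) number of congruence classes in $\configurations_\algorithm\times\configurations_\algorithmprime/_\equiv$. First I would fix a single witness: applying the unboundedness hypothesis with $m=N$ yields a trace $\trace=\block_1\cdots\block_L$ with $|\algorithm(\trace)-\algorithmprime(\trace)|>N$, and without loss of generality $\algorithm(\trace)-\algorithmprime(\trace)>N$ (otherwise swap the roles of $\algorithm$ and $\algorithmprime$, which is harmless since the conclusion concerns only the absolute value). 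Processing $\trace$ from the initial configuration pair $\config_0:=((\initialstate_\algorithm,\lambda j.\invalid),(\initialstate_\algorithmprime,\lambda j.\invalid))$ produces a sequence of configuration pairs $\config_0,\config_1,\dots,\config_L$, where $\config_i$ is reached after the prefix $\block_1\cdots\block_i$ (applying $\upconfig_\algorithm$ and $\upconfig_\algorithmprime$ componentwise). I track the partial miss difference $d_i:=\algorithm(\block_1\cdots\block_i)-\algorithmprime(\block_1\cdots\block_i)$, noting $d_0=0$ and $|d_i-d_{i-1}|\le 1$, since a single access contributes a miss of value $0$ or $1$ to each algorithm.

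Second, I would locate a \emph{strictly increasing} loop in the quotient. Let $D:=\max_i d_i>N$ and, for each $k\in\{0,\dots,D\}$, let $r_k$ be the first index with $d_{r_k}=k$; because $d$ starts at $0$, reaches $D$, and moves in unit steps, each $r_k$ exists and $r_0<r_1<\cdots<r_D$. The $D+1>N$ configuration pairs $\config_{r_0},\dots,\config_{r_D}$ cannot all lie in distinct congruence classes, so there are indices $a<b$ with $\config_{r_a}\equiv_\pi\config_{r_b}$ for some bijection $\pi$. Setting the prefix $u:=\block_1\cdots\block_{r_a}$ and the loop $w:=\block_{r_a+1}\cdots\block_{r_b}$, I obtain $\upconfig(\config_{r_a},w)=\config_{r_b}=\pi^*(\config_{r_a})$, loop length $\ell:=r_b-r_a\ge 1$, and net difference gain $\delta:=d_{r_b}-d_{r_a}=b-a\ge 1$. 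This strict positivity of $\delta$ is precisely why I work with first-hitting times rather than an arbitrary repeated class.

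Third comes the pumping step, which I expect to be the crux. Iterating $w$ verbatim is not possible, because $w$ returns the pair only to a \emph{congruent} pair, not to $\config_{r_a}$ itself. Instead I would form renamed copies of $w$: by \eqref{eq:congruentsuccessors} (applied to both components), processing $\pi^*(w)$ from $\config_{r_b}=\pi^*(\config_{r_a})$ lands in $\pi^*(\config_{r_b})$, and more generally the $t$-th copy, renamed by the $t$-fold composition of $\pi$, carries the $t$-fold rename of $\config_{r_a}$ to that of $\config_{r_b}$. Crucially, \eqref{eq:congruentmisses} guarantees that each renamed copy produces exactly the same number of misses for $\algorithm$ and for $\algorithmprime$ as the original $w$, so each copy contributes the same gain $\delta$ to the difference. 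Concatenating the prefix $u$ with $k$ successive renamed copies therefore yields a trace of length $r_a+k\ell$ whose miss difference equals $a+k\delta$.

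Finally, I would hit every sufficiently large length exactly and read off the constant. Given a target length $m$, write $m=r_a+k\ell+s$ with $k=\lfloor(m-r_a)/\ell\rfloor$ and $0\le s<\ell$, and pad the $k$-fold pumped trace with $s$ repetitions of its last accessed block; by \eqref{eq:update} the last accessed block is always cached in both algorithms, so these $s$ accesses are hits in both and leave the difference at $a+k\delta$. The resulting trace has length exactly $m$ and miss difference $a+k\delta$, so
\[
\frac{|\algorithm(\trace)-\algorithmprime(\trace)|}{m}=\frac{a+k\delta}{r_a+k\ell+s},
\]
which tends to $\delta/\ell>0$ as $m\to\infty$ (equivalently, as $k\to\infty$). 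Choosing $f:=\delta/(2\ell)$ and $m_0$ large enough that the left-hand side exceeds $f$ for every $m>m_0$ then completes the proof. Note that $f$ may depend on $\algorithm,\algorithmprime$ and on the particular loop extracted, so there is no need to bound the loop length $\ell$ by a universal constant; positivity of $\delta/\ell$ is all that the statement requires.
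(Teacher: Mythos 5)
Your proof is correct and takes essentially the same route as the paper's: finiteness of the congruence quotient (Theorem~\ref{thm:finiteindex}), a pigeonhole argument over prefixes at which the miss difference first attains each successive value, and a pumping step built from $\pi$-renamed copies of the extracted loop, whose per-copy miss counts are preserved by \eqref{eq:congruentmisses} and \eqref{eq:congruentsuccessors}. The only notable difference is that you explicitly pad the pumped trace with repeated accesses to the last block (hits in both algorithms) so as to realize every sufficiently large length $m$ exactly, a step the paper leaves implicit in its choice of the constant $f$.
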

\begin{proof}
	Let $\algorithm$ and $\algorithmprime$ be two finite-control-state cache algorithms such that the difference in misses between $\algorithm$ and $\algorithmprime$ is unbounded.
	Let $\len = |\configurations_\algorithm\times\configurations_\algorithmprime/_\equiv|+1$, which must be finite due to Theorem~\ref{thm:finiteindex}.
		 
	As the difference in misses between $\algorithm$ and $\algorithmprime$ is unbounded, there must be a $\trace \in \blockset^*$ such that $|\algorithm(\trace) - \algorithmprime(\trace)| = \len$.
	We will assume without loss of generality\footnote{If $\algorithm(\trace) < \algorithmprime(\trace)$ the following arguments hold with $\algorithm$ and $\algorithmprime$ exchanged.} that $\algorithm(\trace) > \algorithmprime(\trace)$ for such traces $\trace$, and so $|\algorithm(\trace) - \algorithmprime(\trace)| = \algorithm(\trace) - \algorithmprime(\trace)$.
	Then, let $\trace_1, \dots, \trace_\len$ be prefixes of $\trace$, s.t. $\algorithm(\trace_j) - \algorithmprime(\trace_j) = j$ for all $1 \leq j \leq \len$.
	
	Let $\initialconf_\algorithm = (\initialstate_\algorithm, \lambda j.\invalid)$ and $\initialconf_\algorithmprime = (\initialstate_\algorithmprime, \lambda j.\invalid)$ be the initial configurations of $\algorithm$ and $\algorithmprime$.
	Also, let $p_j = \upconfig_\algorithm(\initialconf_\algorithm, \trace_j)$ and $q_j = \upconfig_\algorithmprime(\initialconf_\algorithmprime, \trace_j)$ for all $1 \leq j \leq \len$.
	
	Due to the pigeonhole principle, there must be at least two prefixes $\trace_j$ and $\trace_k$, with $j < k$, such that the pairs of cache configurations $(p_j, q_j)$ and $(p_k, q_k)$ resulting from executing these prefixes are congruent. 
	Assume that $\trace_j$ and $\trace_k$ are two such prefixes.
	
	\newcommand{\tracejtok}{\trace_{j \rightarrow k}}
	
	As $\trace_j$ is a prefix of $\trace_k$, we can decompose $\trace_k$ into $\trace_j$ and $\tracejtok$, such that $\trace_k = \trace_j \circ \tracejtok$.
	From $\algorithm(\trace_j) - \algorithmprime(\trace_j) = j$ and $\algorithm(\trace_k) - \algorithmprime(\trace_k) = k$ we can conclude that $\algorithm(\trace_j,\tracejtok) - \algorithmprime(\trace_j,\tracejtok) = (\algorithm(\trace_k)-\algorithm(\trace_j)) - (\algorithmprime(\trace_k)-\algorithmprime(\trace_j)) = k-j \geq 1$.
		
	 We can arbitrarily extend $\trace_j$ using the following construction of the traces $\tau_m$ and $\omega_m$:
	 \begin{align*}
	 	\tau_0 & = \trace_j,\\
		\tau_{m+1} & = \tau_m\circ \omega_m,\\
		\omega_0 & = \tracejtok,\\
		\omega_{m+1} & = \pi^*(\omega_m).
	 \end{align*} 	   
	 Let $u_m = \upconfig_\algorithm(\initialconf_\algorithm,\tau_m)$ and $v_m = \upconfig_\algorithmprime(\initialconf_\algorithmprime,\tau_m)$.	 
	 For the following induction proof, it will be helpful to express $u_{m+1}$ and $v_{m+1}$ in terms of $u_m$ and $v_m$. We have that $u_{m+1} = \upconfig_\algorithm(\initialconf_\algorithm,\tau_{m}\circ\omega_{m}) = \upconfig_\algorithm(\upconfig_\algorithm(\initialconf_\algorithm,\tau_{m}),\omega_{m}) = \upconfig_\algorithm(u_{m},\omega_{m})$ and similarly $v_{m+1} = \upconfig_\algorithmprime(v_{m},\omega_{m})$.

	 We can show by induction that $(u_m, v_m) \equiv_\pi (u_{m+1}, v_{m+1})$:
	 \begin{itemize}
	 	\item (Induction base) For $m=0$,  $\tau_0 = \trace_j$ and  $\tau_{1} = \tau_0\circ\omega_0=\trace_j\circ\tracejtok = \trace_k$.
	 Thus we have that $u_0 = \upconfig_\algorithm(\initialconf_\algorithm,\tau_0) = \upconfig_\algorithm(\initialconf_\algorithm,\trace_j) = p_j$ and $v_0 = \upconfig_\algorithmprime(\initialconf_\algorithmprime,\tau_0) = \upconfig_\algorithmprime(\initialconf_\algorithmprime,\trace_j) = q_j$. Similarly, $u_1 = p_k$ and $v_1 = q_k$, and we already know that $(p_j,q_j) \equiv_\pi (p_k, q_k)$.
	 	\item (Induction step) For $m > 0$, we know from the induction hypothesis that $(u_{m-1}, v_{m-1}) \equiv_\pi (u_m, v_m)$.
			Applying (\ref{eq:congruentsuccessors}) with $\trace = \omega_{m-1}$ yields 
				$\pi^*(u_m) = \pi^*(\upconfig_\algorithm(u_{m-1}, \omega_{m-1})) = \upconfig_\algorithm(u_m, \pi^*(\omega_{m-1})) = \upconfig_\algorithm(u_m, \omega_m) = u_{m+1}$, and similarly $\pi^*(v_m) = v_{m+1}$. 
				Thus $(u_m, v_m) \equiv_\pi (u_{m+1}, v_{m+1})$.
	 \end{itemize}
	 
	 Since we have that $u_{m+1} = \pi(u_m)$ and $v_{m+1} = \pi(v_m)$, applying (\ref{eq:congruentmisses}) yields that $\algorithm(u_{m+1}, \omega_{m+1}) = \algorithm(u_{m+1}, \pi^*(\omega_m)) = \algorithm(u_m, \omega_m)$ and similarly we have $\algorithm(v_{m+1}, \omega_{m+1}) = \algorithm(v_m, \omega_m)$ for all $m$. In other words, the number of misses on the subtraces $\omega_m$ are always the same in both $\algorithm$ and $\algorithmprime$.
	 We also know that $\algorithm(u_0, \omega_0) = \algorithm(\trace_j,\tracejtok)$ and $\algorithmprime(v_0, \omega_0) = \algorithmprime(\trace_j,\tracejtok)$. Thus, we have
	 \begin{align*}
	 	\algorithm(\tau_m) & = \algorithm(\trace_j) + m\cdot \algorithm(\trace_j,\tracejtok),\\
	 	\algorithmprime(\tau_m) & = \algorithmprime(\trace_j) + m\cdot \algorithmprime(\trace_j,\tracejtok),\\
	 	\algorithm(\tau_m) - \algorithmprime(\tau_m) & = \algorithm(\trace_j)-\algorithmprime(\trace_j) + m\cdot (\algorithm(\trace_j,\tracejtok)-\algorithmprime(\trace_j,\tracejtok)),\\
						& = \algorithm(\trace_j)-\algorithmprime(\trace_j) + m\cdot (k-j).
	 \end{align*}
	Let $f = \frac{k-j}{|\tracejtok|+1} \geq  \frac{1}{|\tracejtok|+1} > 0$.
	For large enough $m$, $|\algorithm(\tau_m) - \algorithmprime(\tau_m)| = \algorithm(\trace_j)-\algorithmprime(\trace_j) + m\cdot (k-j)$ is greater than $f\cdot |\tau_m| = \frac{k-j}{|\tracejtok|+1}\cdot (|\tau_0|+m\cdot|\tracejtok|)$, which proves the theorem.
\end{proof}

In other words, if the difference in misses between two finite-control-state algorithms is unbounded, then it actually grows linearly in the length of the trace.

\newcommand{\mmax}{m_\textit{max}}
\newcommand{\lmin}{\len^*}

\begin{thm}\label{thm:lineargrowth}
	The leak ratio between two finite-control-state cache algorithms $\algorithm$ and $\algorithmprime$ grows linearly in the length of the trace if and only if the difference in misses between $\algorithm$ and $\algorithmprime$ is unbounded: 
\[\leakrat_{\algorithm,\algorithmprime}(\len), \leakrat_{\algorithmprime,\algorithm}(\len) \in \Omega(\len) \quad
  \Leftrightarrow 
\quad
\forall m \in \NN: \exists \trace \in \blockset^*: |\algorithm(\trace) - \algorithmprime(\trace)| > m.\]
\end{thm}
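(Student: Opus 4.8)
The plan is to prove the biconditional in Theorem~\ref{thm:lineargrowth} by establishing each direction separately, using the two main results already in hand: Corollary~\ref{cor:singletrace}, which sandwiches the leak ratio against the maximal single-trace miss difference, and Theorem~\ref{thm:lineardiff}, which upgrades an unbounded miss difference into one that grows linearly in the trace length.

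\begin{proof}
We prove the two directions of the biconditional.

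\textbf{($\Leftarrow$)} Assume that the difference in misses between $\algorithm$ and $\algorithmprime$ is unbounded, i.e.\ $\forall m \in \NN: \exists \trace \in \blockset^*: |\algorithm(\trace) - \algorithmprime(\trace)| > m$. By Theorem~\ref{thm:lineardiff}, there exist $f > 0$ and $m_0 \in \NN$ such that for every $m > m_0$ there is a trace $\trace \in \blockset^m$ with $|\algorithm(\trace) - \algorithmprime(\trace)| > f\cdot |\trace| = f\cdot m$. Fix any length $\len > m_0$ and pick such a trace $\trace \in \blockset^\len$. By the first inequality of Corollary~\ref{cor:singletrace}, namely \eqref{eq:traceprop}, we have
\[
f\cdot \len < |\algorithm(\trace) - \algorithmprime(\trace)| \leq \leakrat_{\algorithm,\algorithmprime}(\len) - 1,
\]
so $\leakrat_{\algorithm,\algorithmprime}(\len) > f\cdot \len + 1$ for all $\len > m_0$, witnessing $\leakrat_{\algorithm,\algorithmprime}(\len) \in \Omega(\len)$. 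Since $|\algorithm(\trace) - \algorithmprime(\trace)| = |\algorithmprime(\trace) - \algorithm(\trace)|$, the identical argument with the roles of $\algorithm$ and $\algorithmprime$ exchanged gives $\leakrat_{\algorithmprime,\algorithm}(\len) \in \Omega(\len)$ as well.

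\textbf{($\Rightarrow$)} We argue by contraposition. Suppose the difference in misses is bounded, i.e.\ there exists $M \in \NN$ such that $|\algorithm(\trace) - \algorithmprime(\trace)| \leq M$ for all $\trace \in \blockset^*$. By the second part of Corollary~\ref{cor:singletrace}, for every length $\len$ there is a trace $\trace \in \blockset^\len$ with $\frac{\leakrat_{\algorithm,\algorithmprime}(\len)-1}{2} \leq |\algorithm(\trace) - \algorithmprime(\trace)| \leq M$, whence $\leakrat_{\algorithm,\algorithmprime}(\len) \leq 2M + 1$ for all $\len$. Thus $\leakrat_{\algorithm,\algorithmprime}(\len) \in \mathcal{O}(1)$, and in particular $\leakrat_{\algorithm,\algorithmprime}(\len) \notin \Omega(\len)$. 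This establishes the contrapositive of the forward direction and completes the proof.
\end{proof}

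The two directions are genuinely asymmetric in difficulty. The backward direction is where essentially all the work lives, but that work is packaged into Theorem~\ref{thm:lineardiff}; once invoked, combining it with the lower-bound half of Corollary~\ref{cor:singletrace} is immediate. The forward direction is the easier one, relying only on the upper-bound half of Corollary~\ref{cor:singletrace} together with the boundedness hypothesis. The only subtlety worth flagging is that the hypothesis of Theorem~\ref{thm:lineargrowth} asserts $\Omega(\len)$ for \emph{both} leak ratios simultaneously; the backward direction delivers both at once because the single-trace miss difference is symmetric in $\algorithm$ and $\algorithmprime$, so no separate argument is needed.
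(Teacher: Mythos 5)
Your proof is correct and follows essentially the same route as the paper's: the backward direction combines Theorem~\ref{thm:lineardiff} with the upper bound \eqref{eq:traceprop} of Corollary~\ref{cor:singletrace}, and the forward direction uses the lower-bound half of that corollary to derive boundedness of the leak ratio (the paper phrases this as a contradiction rather than a contraposition, which is immaterial). Your closing remark that the symmetry of $|\algorithm(\trace)-\algorithmprime(\trace)|$ delivers both $\Omega(\len)$ claims at once matches the paper's argument exactly.
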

\begin{proof}
	Direction ``$\Rightarrow$'':
	Assume for a contradiction that there is an $\mmax \in \NN$, such that for all traces $\trace \in \blockset^*$: $|\algorithm(\trace)-\algorithmprime(\trace)| \leq \mmax$.
	As $\leakrat_{\algorithm,\algorithmprime}(\len) \in \Omega(\len)$ there must be an $\lmin$, such that $\leakrat_{\algorithm,\algorithmprime}(\lmin) > 2\cdot\mmax+1$.
	By the second part of Corollary~\ref{cor:singletrace}, there is a trace $\trace$ such that
	\[\mmax < \frac{\leakrat_{\algorithm,\algorithmprime}(\lmin)-1}{2} \leq |\algorithm(\trace)-\algorithmprime(\trace)|,\]
	which contradicts our assumption.

	Direction ``$\Leftarrow$'':
	We will prove that $\leakrat_{\algorithm,\algorithmprime}(\len) \in \Omega(\len)$. 
	The fact that $\leakrat_{\algorithmprime,\algorithm}(\len) \in \Omega(\len)$ follows by simply exchanging $\algorithm$ and $\algorithmprime$ because $|\algorithm(\trace) - \algorithmprime(\trace)| = |\algorithmprime(\trace) - \algorithm(\trace)|$.
	
	To prove that $\leakrat_{\algorithm,\algorithmprime}(\len) \in \Omega(\len)$, we have to show that there is a $k > 0$ and an $m_0 \in \NN$, such that $\forall m \in \NN,m > m_0: \leakrat_{\algorithm,\algorithmprime}(m) \geq k\cdot m$.
	
	By Theorem~\ref{thm:lineardiff}, we can conclude that there is an $f > 0$ and an $m_0' \in \NN$, such that $\forall m \in \NN, m > m_0': \exists \trace \in \blockset^m: |\algorithm(\trace) - \algorithmprime(\trace)| > f\cdot |\trace|$.
	Pick $k$ to be $f$ and $m_0$ to be $m_0'$.
	To prove the theorem it then remains to show that $\exists \trace \in \blockset^m : |\algorithm(\trace)-\algorithmprime(\trace)| > f \cdot |\trace|$ implies $\leakrat_{\algorithm,\algorithmprime}(m) > f \cdot m$.
	
	To this end, let $\trace \in \blockset^m$ be a trace that satisfies $|\algorithm(\trace)-\algorithmprime(\trace)| > f \cdot |\trace|$.
	Applying the first part of Corollary~\ref{cor:singletrace} yields $f \cdot |\trace| < |\algorithm(\trace)-\algorithmprime(\trace)| < \leakrat_{\algorithm,\algorithmprime}(m)$.
\end{proof}

\section{Related Work}\label{sec:related}
Leak competitiveness is inspired by work on competitive performance analysis. The notion of competitive analysis was first introduced in~\cite{sleator1985amortized}, where the authors bound the number of misses an online algorithm does on a trace of memory blocks in terms of the number of misses of an optimal offline algorithm~\cite{belady1966study}. In contrast to performance, there is no clear candidate for an optimal offline cache algorithm for security, because the best option would be not to cache memory blocks and be trivially non-interferent. This is why we base leak competitiveness on relative competitiveness~\cite{reineke2008relative}. Here, the number of misses one cache algorithm produces is compared with the number of misses of another algorithm, none of them necessarily optimal. 

The notion of leakage we use is based on concepts from quantitative information-flow analysis~\cite{ClarkHM07,m2012measuring,smith09}. They have been successfully used for detecting and quantifying side channels of program code~\cite{doychev2015cacheaudit,HeusserM10,koepfbasin07,newsome09}.

Concepts from quantitative information-flow analysis have been applied to the analysis of cache algorithms~\cite{canones2017}. Our work goes beyond this in two crucial aspects. First, we consider adversaries that measure overall execution time of a victim, whereas~\cite{canones2017} consider so-called {\em access-based adversaries} that gain information by probing the state of a shared cache after the victim's computation terminates. Second, our analysis is based on a comparison of cache algorithms on each program, whereas~\cite{canones2017} identifies the worst possible program for each.

We can, however, interpret some of the results of~\cite{canones2017} in terms of leak-competitiveness w.r.t. to an access-based adversary. The bound that governs leakage in this scenario is not the length of the trace but rather the number of memory blocks used by (i.e. the {\em footprint} of) the victim program. With this, one can read Propositions~6 and 7 of ~\cite{canones2017} as follows:
\begin{itemize}
\item for FIFO and LRU, the number of observations of an access-based adversary is bounded by a constant. This implies that the leak ratios of FIFO relative to LRU, and of LRU relative to FIFO, are in $\mathcal{O}(1)$;
\item for PLRU, the number of observations grows at least linearly with the footprint. This implies that the leak ratio of PLRU relative to FIFO and LRU, respectively, is in~$\Omega(n)$, whereas the leak ratio of FIFO and LRU relative to PLRU is in $\mathcal{O}(1)$.
\end{itemize}
Overall, these examples show that, unlike for time based adversaries, there are dominance relations for the security of cache algorithms with respect to access-based adversaries. We leave a detailed investigation of this case to future work.

Finally, a line of work focuses on secure cache architectures~\cite{he2017secure,zhang2014new}. They consider different architectures, either introducing some sort of partition on the cache or randomness in the replacement of memory blocks, and study their resilience against different kinds of cache side-channel attacks.

When it comes to timing attacks, they mention that, introducing some sort of randomness is the only way to reduce the vulnerability to leak information in this cases. This is because with deterministic cache algorithms, the attacker knows that the observation he obtains only depends on the victim's accesses to memory.
Our work acknowledges that this dependence is unavoidable for deterministic cache algorithms but tries to quantify how specific cache algorithms make the dependance less dangerous.

\section{Conclusions}\label{sec:conclusion}
We presented a novel approach to compare cache algorithms in terms of their vulnerability to side-channel attacks. Our core insight is that for leakage, as opposed to performance, there is no dominance relationship between any two cache algorithms, in the sense that one algorithm would outperform the other on all programs.

\subsection*{Acknowledgments}
We thank Pierre Ganty and the anonymous reviewers for their constructive feedback.

This work was supported by Microsoft Research through its PhD scholarship programme, a gift from Intel Corporation, Ram{\'o}n y Cajal grant RYC-2014-16766, Spanish projects TIN2015-70713-R DEDETIS and TIN2015-67522-C3-1-R TRACES, and Madrid regional project S2013/ICE-2731 N-GREENS.


\newcommand{\etalchar}[1]{$^{#1}$}

\appendix

\begin{scriptsize}
\begin{sidewaysfigure}[h!]
\vspace{127ex}
\centering
\begin{subfigure}{\textwidth}
$$
\begin{array}{rc}
\text{\scriptsize{LRU}}&\{ \invalid , \invalid \}\\
\text{\scriptsize{FIFO}}&\{ \invalid , \invalid \}
\end{array}
\overset{\mathrm{A}}{\underset{(1,1)}{\longrightarrow}}
\begin{array}{c}
\{ \mathrm{A} , \invalid \}\\
\{ \mathrm{A} , \invalid \}
\end{array}
\overset{\mathrm{B}}{\underset{(1,1)}{\longrightarrow}}
\begin{array}{c}
\{ \mathrm{B} , \textbf{A} \}\\
\{ \mathrm{B} , \textbf{A} \}
\end{array}
\overset{\mathrm{A}}{\underset{(0,0)}{\longrightarrow}}
\begin{array}{c}
\{ \mathrm{A} , \textbf{B} \}\\
\{ \mathrm{B} , \textbf{A} \}
\end{array}
\overset{\mathrm{C}}{\underset{(1,1)}{\longrightarrow}}
\begin{array}{c}
\{ \mathrm{C} , \textbf{A} \}\\
\{ \mathrm{C} , \textbf{B} \}
\end{array}
\overset{\mathrm{A}}{\underset{(0,1)}{\longrightarrow}}
\begin{array}{c}
\{ \mathrm{A} , \textbf{C} \}\\
\{ \mathrm{A} , \textbf{C} \}
\end{array}
\overset{\mathrm{C}}{\underset{(0,0)}{\longrightarrow}}
\begin{array}{c}
\{ \mathrm{C} , \textbf{A} \}\\
\{ \mathrm{A} , \textbf{C} \}
\end{array}
\overset{\mathrm{B}}{\underset{(1,1)}{\longrightarrow}}
\begin{array}{c}
\{ \mathrm{B} , \textbf{C} \}\\
\{ \mathrm{B} , \textbf{A} \}
\end{array}
\overset{\mathrm{B}}{\underset{(0,0)}{\longrightarrow}}
\begin{array}{c}
\{ \mathrm{B} , \textbf{C} \}\\
\{ \mathrm{B} , \textbf{A} \}
\end{array}
\overset{\mathrm{B}}{\underset{(0,0)}{\longrightarrow}}
\begin{array}{c}
\{ \mathrm{B} , \textbf{C} \}\\
\{ \mathrm{B} , \textbf{A} \}
\end{array}
\begin{array}{c}
\\
:\,(4,5)
\end{array}
$$

$$
\begin{array}{rc}
\text{\scriptsize{LRU}}&\{ \invalid , \invalid \}\\
\text{\scriptsize{FIFO}}&\{ \invalid , \invalid \}
\end{array}
\overset{\mathrm{A}}{\underset{(1,1)}{\longrightarrow}}
\begin{array}{c}
\{ \mathrm{A} , \invalid \}\\
\{ \mathrm{A} , \invalid \}
\end{array}
\overset{\mathrm{B}}{\underset{(1,1)}{\longrightarrow}}
\begin{array}{c}
\{ \mathrm{B} , \textbf{A} \}\\
\{ \mathrm{B} , \textbf{A} \}
\end{array}
\overset{\mathrm{A}}{\underset{(0,0)}{\longrightarrow}}
\begin{array}{c}
\{ \mathrm{A} , \textbf{B} \}\\
\{ \mathrm{B} , \textbf{A} \}
\end{array}
\overset{\mathrm{C}}{\underset{(1,1)}{\longrightarrow}}
\begin{array}{c}
\{ \mathrm{C} , \textbf{A} \}\\
\{ \mathrm{C} , \textbf{B} \}
\end{array}
\overset{\mathrm{D}}{\underset{(1,1)}{\longrightarrow}}
\begin{array}{c}
\{ \mathrm{D} , \textbf{C} \}\\
\{ \mathrm{D} , \textbf{C} \}
\end{array}
\overset{\mathrm{A}}{\underset{(1,1)}{\longrightarrow}}
\begin{array}{c}
\{ \mathrm{A} , \textbf{D} \}\\
\{ \mathrm{A} , \textbf{D} \}
\end{array}
\overset{\mathrm{A}}{\underset{(0,0)}{\longrightarrow}}
\begin{array}{c}
\{ \mathrm{A} , \textbf{D} \}\\
\{ \mathrm{A} , \textbf{D} \}
\end{array}
\overset{\mathrm{A}}{\underset{(0,0)}{\longrightarrow}}
\begin{array}{c}
\{ \mathrm{A} , \textbf{D} \}\\
\{ \mathrm{A} , \textbf{D} \}
\end{array}
\overset{\mathrm{A}}{\underset{(0,0)}{\longrightarrow}}
\begin{array}{c}
\{ \mathrm{A} , \textbf{D} \}\\
\{ \mathrm{A} , \textbf{D} \}
\end{array}
\begin{array}{c}
\\
:\, (5,5)
\end{array}
$$

$$
\begin{array}{rc}
\text{\scriptsize{LRU}}&\{ \invalid , \invalid \}\\
\text{\scriptsize{FIFO}}&\{ \invalid , \invalid \}
\end{array}
\overset{\mathrm{A}}{\underset{(1,1)}{\longrightarrow}}
\begin{array}{c}
\{ \mathrm{A} , \invalid \}\\
\{ \mathrm{A} , \invalid \}
\end{array}
\overset{\mathrm{B}}{\underset{(1,1)}{\longrightarrow}}
\begin{array}{c}
\{ \mathrm{B} , \textbf{A} \}\\
\{ \mathrm{B} , \textbf{A} \}
\end{array}
\overset{\mathrm{A}}{\underset{(0,0)}{\longrightarrow}}
\begin{array}{c}
\{ \mathrm{A} , \textbf{B} \}\\
\{ \mathrm{B} , \textbf{A} \}
\end{array}
\overset{\mathrm{C}}{\underset{(1,1)}{\longrightarrow}}
\begin{array}{c}
\{ \mathrm{C} , \textbf{A} \}\\
\{ \mathrm{C} , \textbf{B} \}
\end{array}
\overset{\mathrm{B}}{\underset{(1,0)}{\longrightarrow}}
\begin{array}{c}
\{ \mathrm{B} , \textbf{C} \}\\
\{ \mathrm{C} , \textbf{B} \}
\end{array}
\overset{\mathrm{A}}{\underset{(1,1)}{\longrightarrow}}
\begin{array}{c}
\{ \mathrm{A} , \textbf{B} \}\\
\{ \mathrm{A} , \textbf{C} \}
\end{array}
\overset{\mathrm{D}}{\underset{(1,1)}{\longrightarrow}}
\begin{array}{c}
\{ \mathrm{D} , \textbf{A} \}\\
\{ \mathrm{D} , \textbf{A} \}
\end{array}
\overset{\mathrm{D}}{\underset{(0,0)}{\longrightarrow}}
\begin{array}{c}
\{ \mathrm{D} , \textbf{A} \}\\
\{ \mathrm{D} , \textbf{A} \}
\end{array}
\overset{\mathrm{D}}{\underset{(0,0)}{\longrightarrow}}
\begin{array}{c}
\{ \mathrm{D} , \textbf{A} \}\\
\{ \mathrm{D} , \textbf{A} \}
\end{array}
\begin{array}{c}
\\
:\, (6,5)
\end{array}
$$

$$
\begin{array}{rc}
\text{\scriptsize{LRU}}&\{ \invalid , \invalid \}\\
\text{\scriptsize{FIFO}}&\{ \invalid , \invalid \}
\end{array}
\overset{\mathrm{A}}{\underset{(1,1)}{\longrightarrow}}
\begin{array}{c}
\{ \mathrm{A} , \invalid \}\\
\{ \mathrm{A} , \invalid \}
\end{array}
\overset{\mathrm{B}}{\underset{(1,1)}{\longrightarrow}}
\begin{array}{c}
\{ \mathrm{B} , \textbf{A} \}\\
\{ \mathrm{B} , \textbf{A} \}
\end{array}
\overset{\mathrm{A}}{\underset{(0,0)}{\longrightarrow}}
\begin{array}{c}
\{ \mathrm{A} , \textbf{B} \}\\
\{ \mathrm{B} , \textbf{A} \}
\end{array}
\overset{\mathrm{C}}{\underset{(1,1)}{\longrightarrow}}
\begin{array}{c}
\{ \mathrm{C} , \textbf{A} \}\\
\{ \mathrm{C} , \textbf{B} \}
\end{array}
\overset{\mathrm{B}}{\underset{(1,0)}{\longrightarrow}}
\begin{array}{c}
\{ \mathrm{B} , \textbf{C} \}\\
\{ \mathrm{C} , \textbf{B} \}
\end{array}
\overset{\mathrm{A}}{\underset{(1,1)}{\longrightarrow}}
\begin{array}{c}
\{ \mathrm{A} , \textbf{B} \}\\
\{ \mathrm{A} , \textbf{C} \}
\end{array}
\overset{\mathrm{C}}{\underset{(1,0)}{\longrightarrow}}
\begin{array}{c}
\{ \mathrm{C} , \textbf{A} \}\\
\{ \mathrm{A} , \textbf{C} \}
\end{array}
\overset{\mathrm{B}}{\underset{(1,1)}{\longrightarrow}}
\begin{array}{c}
\{ \mathrm{B} , \textbf{C} \}\\
\{ \mathrm{B} , \textbf{A} \}
\end{array}
\overset{\mathrm{B}}{\underset{(0,0)}{\longrightarrow}}
\begin{array}{c}
\{ \mathrm{B} , \textbf{C} \}\\
\{ \mathrm{B} , \textbf{A} \}
\end{array}
\begin{array}{c}
\\
:\, (7,5)
\end{array}
$$

$$
\begin{array}{rc}
\text{\scriptsize{LRU}}&\{ \invalid , \invalid \}\\
\text{\scriptsize{FIFO}}&\{ \invalid , \invalid \}
\end{array}
\overset{\mathrm{A}}{\underset{(1,1)}{\longrightarrow}}
\begin{array}{c}
\{ \mathrm{A} , \invalid \}\\
\{ \mathrm{A} , \invalid \}
\end{array}
\overset{\mathrm{B}}{\underset{(1,1)}{\longrightarrow}}
\begin{array}{c}
\{ \mathrm{B} , \textbf{A} \}\\
\{ \mathrm{B} , \textbf{A} \}
\end{array}
\overset{\mathrm{A}}{\underset{(0,0)}{\longrightarrow}}
\begin{array}{c}
\{ \mathrm{A} , \textbf{B} \}\\
\{ \mathrm{B} , \textbf{A} \}
\end{array}
\overset{\mathrm{C}}{\underset{(1,1)}{\longrightarrow}}
\begin{array}{c}
\{ \mathrm{C} , \textbf{A} \}\\
\{ \mathrm{C} , \textbf{B} \}
\end{array}
\overset{\mathrm{B}}{\underset{(1,0)}{\longrightarrow}}
\begin{array}{c}
\{ \mathrm{B} , \textbf{C} \}\\
\{ \mathrm{C} , \textbf{B} \}
\end{array}
\overset{\mathrm{A}}{\underset{(1,1)}{\longrightarrow}}
\begin{array}{c}
\{ \mathrm{A} , \textbf{B} \}\\
\{ \mathrm{A} , \textbf{C} \}
\end{array}
\overset{\mathrm{C}}{\underset{(1,0)}{\longrightarrow}}
\begin{array}{c}
\{ \mathrm{C} , \textbf{A} \}\\
\{ \mathrm{A} , \textbf{C} \}
\end{array}
\overset{\mathrm{B}}{\underset{(1,1)}{\longrightarrow}}
\begin{array}{c}
\{ \mathrm{B} , \textbf{C} \}\\
\{ \mathrm{B} , \textbf{A} \}
\end{array}
\overset{\mathrm{A}}{\underset{(1,0)}{\longrightarrow}}
\begin{array}{c}
\{ \mathrm{A} , \textbf{B} \}\\
\{ \mathrm{B} , \textbf{A} \}
\end{array}
\begin{array}{c}
\\
:\, (8,5)
\end{array}
$$
\caption{Example of a set of traces where LRU produces 5 observations and FIFO produces 1.}
\label{fig:example_a}
\end{subfigure}

\begin{subfigure}{\textwidth}
$$
\begin{array}{rc}
\text{\scriptsize{LRU}}&\{ \invalid , \invalid \}\\
\text{\scriptsize{FIFO}}&\{ \invalid , \invalid \}
\end{array}
\overset{\mathrm{A}}{\underset{(1,1)}{\longrightarrow}}
\begin{array}{c}
\{ \mathrm{A} , \invalid \}\\
\{ \mathrm{A} , \invalid \}
\end{array}
\overset{\mathrm{B}}{\underset{(1,1)}{\longrightarrow}}
\begin{array}{c}
\{ \mathrm{B} , \textbf{A} \}\\
\{ \mathrm{B} , \textbf{A} \}
\end{array}
\overset{\mathrm{A}}{\underset{(0,0)}{\longrightarrow}}
\begin{array}{c}
\{ \mathrm{A} , \textbf{B} \}\\
\{ \mathrm{B} , \textbf{A} \}
\end{array}
\overset{\mathrm{C}}{\underset{(1,1)}{\longrightarrow}}
\begin{array}{c}
\{ \mathrm{C} , \textbf{A} \}\\
\{ \mathrm{C} , \textbf{B} \}
\end{array}
\overset{\mathrm{B}}{\underset{(1,0)}{\longrightarrow}}
\begin{array}{c}
\{ \mathrm{B} , \textbf{C} \}\\
\{ \mathrm{C} , \textbf{B} \}
\end{array}
\overset{\mathrm{A}}{\underset{(1,1)}{\longrightarrow}}
\begin{array}{c}
\{ \mathrm{A} , \textbf{B} \}\\
\{ \mathrm{A} , \textbf{C} \}
\end{array}
\overset{\mathrm{A}}{\underset{(0,0)}{\longrightarrow}}
\begin{array}{c}
\{ \mathrm{A} , \textbf{B} \}\\
\{ \mathrm{A} , \textbf{C} \}
\end{array}
\overset{\mathrm{A}}{\underset{(0,0)}{\longrightarrow}}
\begin{array}{c}
\{ \mathrm{A} , \textbf{B} \}\\
\{ \mathrm{A} , \textbf{C} \}
\end{array}
\overset{\mathrm{A}}{\underset{(0,0)}{\longrightarrow}}
\begin{array}{c}
\{ \mathrm{A} , \textbf{B} \}\\
\{ \mathrm{A} , \textbf{C} \}
\end{array}
\begin{array}{c}
\\
:\, (5,4)
\end{array}
$$

$$
\begin{array}{rc}
\text{\scriptsize{LRU}}&\{ \invalid , \invalid \}\\
\text{\scriptsize{FIFO}}&\{ \invalid , \invalid \}
\end{array}
\overset{\mathrm{A}}{\underset{(1,1)}{\longrightarrow}}
\begin{array}{c}
\{ \mathrm{A} , \invalid \}\\
\{ \mathrm{A} , \invalid \}
\end{array}
\overset{\mathrm{B}}{\underset{(1,1)}{\longrightarrow}}
\begin{array}{c}
\{ \mathrm{B} , \textbf{A} \}\\
\{ \mathrm{B} , \textbf{A} \}
\end{array}
\overset{\mathrm{A}}{\underset{(0,0)}{\longrightarrow}}
\begin{array}{c}
\{ \mathrm{A} , \textbf{B} \}\\
\{ \mathrm{B} , \textbf{A} \}
\end{array}
\overset{\mathrm{C}}{\underset{(1,1)}{\longrightarrow}}
\begin{array}{c}
\{ \mathrm{C} , \textbf{A} \}\\
\{ \mathrm{C} , \textbf{B} \}
\end{array}
\overset{\mathrm{D}}{\underset{(1,1)}{\longrightarrow}}
\begin{array}{c}
\{ \mathrm{D} , \textbf{C} \}\\
\{ \mathrm{D} , \textbf{C} \}
\end{array}
\overset{\mathrm{A}}{\underset{(1,1)}{\longrightarrow}}
\begin{array}{c}
\{ \mathrm{A} , \textbf{D} \}\\
\{ \mathrm{A} , \textbf{D} \}
\end{array}
\overset{\mathrm{A}}{\underset{(0,0)}{\longrightarrow}}
\begin{array}{c}
\{ \mathrm{A} , \textbf{D} \}\\
\{ \mathrm{A} , \textbf{D} \}
\end{array}
\overset{\mathrm{A}}{\underset{(0,0)}{\longrightarrow}}
\begin{array}{c}
\{ \mathrm{A} , \textbf{D} \}\\
\{ \mathrm{A} , \textbf{D} \}
\end{array}
\overset{\mathrm{A}}{\underset{(0,0)}{\longrightarrow}}
\begin{array}{c}
\{ \mathrm{A} , \textbf{D} \}\\
\{ \mathrm{A} , \textbf{D} \}
\end{array}
\begin{array}{c}
\\
:\, (5,5)
\end{array}
$$

$$
\begin{array}{rc}
\text{\scriptsize{LRU}}&\{ \invalid , \invalid \}\\
\text{\scriptsize{FIFO}}&\{ \invalid , \invalid \}
\end{array}
\overset{\mathrm{A}}{\underset{(1,1)}{\longrightarrow}}
\begin{array}{c}
\{ \mathrm{A} , \invalid \}\\
\{ \mathrm{A} , \invalid \}
\end{array}
\overset{\mathrm{B}}{\underset{(1,1)}{\longrightarrow}}
\begin{array}{c}
\{ \mathrm{B} , \textbf{A} \}\\
\{ \mathrm{B} , \textbf{A} \}
\end{array}
\overset{\mathrm{A}}{\underset{(0,0)}{\longrightarrow}}
\begin{array}{c}
\{ \mathrm{A} , \textbf{B} \}\\
\{ \mathrm{B} , \textbf{A} \}
\end{array}
\overset{\mathrm{C}}{\underset{(1,1)}{\longrightarrow}}
\begin{array}{c}
\{ \mathrm{C} , \textbf{A} \}\\
\{ \mathrm{C} , \textbf{B} \}
\end{array}
\overset{\mathrm{A}}{\underset{(0,1)}{\longrightarrow}}
\begin{array}{c}
\{ \mathrm{A} , \textbf{C} \}\\
\{ \mathrm{A} , \textbf{C} \}
\end{array}
\overset{\mathrm{B}}{\underset{(1,1)}{\longrightarrow}}
\begin{array}{c}
\{ \mathrm{B} , \textbf{A} \}\\
\{ \mathrm{B} , \textbf{A} \}
\end{array}
\overset{\mathrm{C}}{\underset{(1,1)}{\longrightarrow}}
\begin{array}{c}
\{ \mathrm{C} , \textbf{B} \}\\
\{ \mathrm{C} , \textbf{B} \}
\end{array}
\overset{\mathrm{C}}{\underset{(0,0)}{\longrightarrow}}
\begin{array}{c}
\{ \mathrm{C} , \textbf{B} \}\\
\{ \mathrm{C} , \textbf{B} \}
\end{array}
\overset{\mathrm{C}}{\underset{(1,1)}{\longrightarrow}}
\begin{array}{c}
\{ \mathrm{C} , \textbf{B} \}\\
\{ \mathrm{C} , \textbf{B} \}
\end{array}
\begin{array}{c}
\\
:\, (5,6)
\end{array}
$$

$$
\begin{array}{rc}
\text{\scriptsize{LRU}}&\{ \invalid , \invalid \}\\
\text{\scriptsize{FIFO}}&\{ \invalid , \invalid \}
\end{array}
\overset{\mathrm{A}}{\underset{(1,1)}{\longrightarrow}}
\begin{array}{c}
\{ \mathrm{A} , \invalid \}\\
\{ \mathrm{A} , \invalid \}
\end{array}
\overset{\mathrm{B}}{\underset{(1,1)}{\longrightarrow}}
\begin{array}{c}
\{ \mathrm{B} , \textbf{A} \}\\
\{ \mathrm{B} , \textbf{A} \}
\end{array}
\overset{\mathrm{A}}{\underset{(0,0)}{\longrightarrow}}
\begin{array}{c}
\{ \mathrm{A} , \textbf{B} \}\\
\{ \mathrm{B} , \textbf{A} \}
\end{array}
\overset{\mathrm{C}}{\underset{(1,1)}{\longrightarrow}}
\begin{array}{c}
\{ \mathrm{C} , \textbf{A} \}\\
\{ \mathrm{C} , \textbf{B} \}
\end{array}
\overset{\mathrm{A}}{\underset{(0,1)}{\longrightarrow}}
\begin{array}{c}
\{ \mathrm{A} , \textbf{C} \}\\
\{ \mathrm{A} , \textbf{C} \}
\end{array}
\overset{\mathrm{C}}{\underset{(0,0)}{\longrightarrow}}
\begin{array}{c}
\{ \mathrm{C} , \textbf{A} \}\\
\{ \mathrm{A} , \textbf{C} \}
\end{array}
\overset{\mathrm{B}}{\underset{(1,1)}{\longrightarrow}}
\begin{array}{c}
\{ \mathrm{B} , \textbf{C} \}\\
\{ \mathrm{B} , \textbf{A} \}
\end{array}
\overset{\mathrm{C}}{\underset{(0,1)}{\longrightarrow}}
\begin{array}{c}
\{ \mathrm{C} , \textbf{B} \}\\
\{ \mathrm{C} , \textbf{B} \}
\end{array}
\overset{\mathrm{A}}{\underset{(1,1)}{\longrightarrow}}
\begin{array}{c}
\{ \mathrm{A} , \textbf{C} \}\\
\{ \mathrm{A} , \textbf{C} \}
\end{array}
\begin{array}{c}
\\
:\, (5,7)
\end{array}
$$
\caption{Example of a set of traces where FIFO produces 4 observations and LRU produces 1.}
\label{fig:example_b}
\end{subfigure}

\caption{Example of two sets of traces of memory blocks updating empty caches under cache algorithms LRU and FIFO.
Each configuration of the cache shows the blocks cached between brackets, the symbol $\invalid$ represents a line with no memory block cached in it and the bold face memory block is the next one to be evicted (the least recently used for LRU and the first to be cached for FIFO).
Each update of the cache configuration is labeled with the accessed memory block (over the arrow) and the number of misses it produces on LRU and FIFO respectively (under the arrow).
At the end of each trace we have the total number of misses the trace produces on LRU and FIFO respectively.}
\label{fig:example}
\end{sidewaysfigure}
\end{scriptsize}

\end{document}